\newtheoremstyle{propstyle} 
    {2mm}                    
    {1mm}                    
    {\itshape}                   
    {}                           
    {\scshape}                   
    {.}                          
    {.5em}                       
    {}  
\theoremstyle{propstyle}
\newtheorem{proposition}{Proposition}
\theoremstyle{propstyle}
\newtheorem{definition}{Definition}
\theoremstyle{propstyle}
\theoremstyle{propstyle}
\theoremstyle{propstyle}
\newtheorem{assumption}{Assumption}
\newcounter{algorithm}
\newenvironment{algorithm}[1][]{\refstepcounter{algorithm}\par\medskip\noindent%
   \textbf{Algorithm~\thealgorithm: #1} \rmfamily}{\medskip}
\renewcommand{\paragraph}{%
  \@startsection{paragraph}{4}%
  {\z@}{2ex \@plus 1ex \@minus .2ex}{-1em}%
  {\normalfont\normalsize\bfseries}%
}
\DeclareMathAlphabet\mathbfcal{OMS}{cmsy}{b}{n}
\newcommand{\bv}{\mathbf{v}}
\newcommand{\bc}{\mathbf{c}}
\newcommand{\bx}{\mathbf{x}}
\newcommand{\by}{\mathbf{y}}
\newcommand{\bg}{\mathbf{g}}
\newcommand{\bw}{\mathbf{w}}
\newcommand{\bS}{\mathbf{S}}
\newcommand{\bT}{\mathbf{T}}
\newcommand{\bA}{\mathbf{A}}
\newcommand{\bW}{\mathbf{W}}
\newcommand{\bG}{\mathbf{G}}
\newcommand{\bL}{\mathbf{L}}
\newcommand{\bI}{\mathbf{I}}
\newcommand{\bH}{\mathbf{H}}
\newcommand{\bU}{\mathbf{U}}
\newcommand{\bV}{\mathbf{V}}
\newcommand{\bK}{\mathbf{K}}
\newcommand{\bQ}{\mathbf{Q}}
\newcommand{\bB}{\mathbf{B}}
\newcommand{\bR}{\mathbf{R}}
\newcommand{\bfzero}{\mathbf{0}}
\newcommand{\bfmu}{\bm{\mu}}
\newcommand{\bftheta}{\bm{\theta}}
\newcommand{\bfeta}{\bm{\eta}}
\newcommand{\bfrho}{\bm{\rho}}
\newcommand{\bfSigma}{\bm{\Sigma}}
\newcommand{\bfLambda}{\bm{\Lambda}}
\newcommand{\bfOmega}{\bm{\Omega}}
\newcommand{\blt}{\tilde{\bL}}
\newcommand{\iM}{{i_1,\ldots,i_M}}
\newcommand{\jm}{{j_1,\ldots,j_m}}
\newcommand{\jk}{{j_1,\ldots,j_k}}
\newcommand{\jl}{{j_1,\ldots,j_\ell}}
\newcommand{\jM}{{j_1,\ldots,j_M}}
\newcommand{\blockdiag}{blockdiag}
\newcommand{\knots}{\mathcal{K}}
\newcommand{\order}{\mathcal{O}}
\renewcommand{\prec}{\bm{\Lambda}}
\newcommand{\pprec}{\widetilde{\bm{\Lambda}}}
\newcommand{\levol}{\mathbf{A}}
\newcommand{\grid}{\mathcal{G}}
\newcommand{\knotind}{\mathcal{K}}
\newcommand{\gridind}{\mathcal{I}}
\newcommand{\Jind}{\mathcal{J}}
\newcommand{\Sp}{\mathcal{S}}
\newcommand{\normal}{\mathcal{N}}
\newcommand{\domain}{\mathcal{D}}
\newcommand{\reals}{\mathbb{R}}
\DeclareMathOperator{\mrd}{MRD}
\title{Multi-resolution filters for massive spatio-temporal data}
\author{Marcin Jurek\thanks{Department of Statistics, Texas A\&M University} \and Matthias Katzfuss\footnotemark[1] \thanks{Corresponding author: \texttt{katzfuss@gmail.com}}}
\date{}
\begin{document}

\maketitle

\begin{abstract}

Spatio-temporal datasets are rapidly growing in size. For example, environmental variables are measured with increasing resolution by increasing numbers of automated sensors mounted on satellites and aircraft. Using such data, which are typically noisy and incomplete, the goal is to obtain complete maps of the spatio-temporal process, together with uncertainty quantification. We focus here on real-time filtering inference in linear Gaussian state-space models. At each time point, the state is a spatial field evaluated on a very large spatial grid, making exact inference using the Kalman filter computationally infeasible. Instead, we propose a multi-resolution filter (MRF), a highly scalable and fully probabilistic filtering method that resolves spatial features at all scales. We prove that the MRF matrices exhibit a particular block-sparse multi-resolution structure that is preserved under filtering operations through time. We describe connections to existing methods, including hierarchical matrices from numerical mathematics. We also discuss inference on time-varying parameters using an approximate Rao-Blackwellized particle filter, in which the integrated likelihood is computed using the MRF. Using a simulation study and a real satellite-data application, we show that the MRF strongly outperforms competing approaches.
\end{abstract}


\section{Introduction \label{sec:intro}}

Massive spatio-temporal data have become ubiquitous in the environmental sciences, which is largely due to Earth-observing satellites providing high-resolution measurements of environmental variables on a continental or even global scale. Accounting for spatial and temporal dependence is very important for satellite data, as atmospheric variables vary over space and time, and measurements from different orbits are often complementary in their coverage. 

When time and space are discretized, spatio-temporal data are typically modeled using a dynamical state-space model (SSM), which describes how the state (i.e., the spatial field evaluated at a spatial grid) evolves over time and how the state is related to the observations. Dynamical SSMs can include information from other sources and sophisticated temporal dynamics in terms of partial differential equations; for example, the effect of wind on atmospheric variables can be captured by an advection term. Such informative, physical evolution models are crucial for producing meaningful forecasts.

We focus here on real-time or on-line filtering inference in linear Gaussian SSMs, which means that at each time point $t$, we are interested in the posterior distribution of the spatial field at time $t$ given all data obtained up to time $t$. The filtering distributions in this setting are Gaussian and can in principle be determined exactly by the Kalman filter \citep[][]{Kalman1960}, but this technique is not computationally feasible for large grids. Particle filter methods such as sequential importance (re-)sampling \citep[e.g.,][]{Gordon1993} are asymptotically exact as the number of particles increases, but suffer from particle collapse for finite particle size in high dimensions \citep[e.g.,][]{Snyder2008}.

In the  geosciences, filtering inference in SSMs is referred to as data assimilation \citep[see, e.g.,][for a review]{Nychka2010}, especially when the evolution is described by a complex computer model. Data assimilation is typically carried out via variational methods \citep[e.g.,][]{Talagrand1987} or the ensemble Kalman filter \citep[EnKF; e.g.,][]{Evensen1994,Evensen2007,katzfuss2016understanding,Houtekamer2016}. The EnKF represents distributions by an ensemble, which is propagated using the temporal evolution model and updated via a linear shift based on new observations. In practice, only small ensemble sizes are computationally feasible, resulting in a low-dimensional representation and substantial sampling error.

In the statistics literature, computationally feasible filtering approaches for dynamical spatio-temporal SSMs often rely on low-rank assumptions \citep[e.g.,][]{Verlaan1995,Pham1998,Wikle1999,Katzfuss2010}, but such approaches cannot fully resolve fine-scale variation \citep{Stein2013a}. 
Therefore, recent methods for large spatial-only data have instead achieved fast computation through sparsity assumptions \citep[e.g.,][]{Lindgren2011a,Nychka2012,Datta2016,katzfuss2017general}, and idea that can also be used in the context of retrospective, ``off-line'' spatio-temporal analysis, in which time is essentially treated as an additional spatial dimension and the resulting spatio-temporal covariance function is modeled or approximated \citep[e.g.,][]{Zhang2014,Datta2016a}. However, most sparsity-based methods cannot be easily extended to the filtering perspective of interest here, because the sparsity structure is lost when propagating forward in time.

Here, we propose a novel multi-resolution filter (MRF) for big streaming spatio-temporal data based on linear Gaussian SSMs. The MRF is a highly scalable, fully probabilistic procedure that results in joint posterior predictive distributions for the spatio-temporal field of interest. 
In contrast to the EnKF, MRF computations are deterministic and do not suffer from sampling variability. 
In contrast to low-rank approaches, the MRF does not rely on dimension reduction.
Similar to wavelet-based filtering methods \citep[e.g.][]{Chui1992,Cristi2000,Renaud2005,Beezley2011,Hickmann2015}, the MRF can be viewed as employing a large number of basis functions at multiple levels of spatial resolution, which can capture spatial structure from very fine to very large scales. However, as opposed to wavelets, the MRF basis functions automatically adapt to approximate the covariance structure implied by the assumed SSM.
These features allowed the MRF to strongly outperform existing approaches in our numerical comparisons.

The MRF relies on a new approximate covariance-matrix decomposition, for which the resulting matrix factors exhibit a particular block-sparse multi-resolution structure. This decomposition is based on the multi-resolution approximation \citep[][]{Katzfuss2016mra,KatzfussGong} of spatial processes, which performed very well in a recent comparison of different methods for large spatial-only data \citep{Heaton2017}.
Using advanced concepts from graph theory, we prove the perhaps surprising property that the block-sparse structure of the MRF matrices can be maintained under filtering operations through time, which in turn is crucial for allowing us to show that the MRF exhibits linear computational complexity for fixed tuning parameters. Note that this is in contrast to other sparse-matrix approximations, as even matrices with simple sparsity patterns (e.g., tridiagonal matrices) do not preserve sparsity under inversion. In fact, we suspect that the multi-resolution decomposition and its special cases are unique in terms of preserving matrix sparsity.

We also establish a close connection between our multi-resolution decomposition and hierarchical matrices. Despite being relatively unknown in statistics, hierarchical matrices \citep[e.g.,][]{hackbusch2015hierarchical} are a highly popular and widely studied class of matrix approximations in numerical mathematics. 
We introduce this matrix class into the statistical literature, and describe how hierarchical matrices can be applied to SSMs based on second-order partial differential equations, including those describing advection and diffusion. This marks a major step forward with respect to multiple previous hierarchical-matrix approaches for fast high-dimensional Kalman filtering \citep[e.g.][]{LiAmbikasaran2014,Saibaba2015,Ambikasaran2016}, which were only applicable in the simple case of a random walk.

Finally, we discuss extensions for inference on time-varying parameters that are not part of the spatial field, using a Rao-Blackwellized particle filter, in which the integrated likelihood is approximated using the MRF.

The remainder of this article is organized as follows. Section \ref{sec:SSMandKF} describes the linear Gaussian state-space model and reviews the Kalman filter. In Section \ref{sec:mrf}, we present the MRF. Section \ref{sec:properties} details key properties of the MRF, and Section \ref{sec:connections} discusses connections to existing approaches. Section \ref{sec:parinference} shows how the MRF can be extended when the model includes unknown parameters. In Section \ref{sec:simulations}, we present a numerical comparison of the MRF to existing methods. Section \ref{sec:data-application} demonstrates a practical application of the MRF to inferring sediment concentration in Lake Michigan based on satellite data. We conclude in Section \ref{sec:conclusion}. Proofs are given in Appendix \ref{app:proofs}.

A separate Supplementary Material document contains Sections S1-S8 with further properties, details, and proofs.  At \url{http://spatial.stat.tamu.edu}, we provide additional illustrations. All code will be provided upon publication.

\section{Spatio-temporal state-space models (SSMs) and filtering inference}\label{sec:SSMandKF}

\subsection{Spatio-temporal state-space model}\label{sec:spatioTempSSM}

Let $\bx_t$ be the $n_\grid$-dimensional latent state vector of interest, representing a (mean-corrected) spatio-temporal process $x_t(\cdot)$ at time $t$ evaluated at a fine grid $\grid =\{\bg_1,\ldots,\bg_{n_\grid}\}$ on a spatial domain or region $\domain$. Further, let $\by_t$ denote the observed $n_t$-dimensional data vector at time $t$. We assume a linear Gaussian spatio-temporal state-space model given by an observation equation and an evolution equation,
\begin{eqnarray}
\by_t & = \, \bH_t \bx_t + \bv_t, \quad & \bv_t \sim \normal_{n_t}(\bfzero,\bR_t), \label{eq:obs}\\
\bx_t & = \, \levol_t\bx_{t-1} + \bw_t, & \bw_t \sim \normal_{n_\grid}(\bfzero,\bQ_t), \label{eq:evol}
\end{eqnarray}
respectively, for time $t=1,2,\ldots$. The initial state also follows a Gaussian distribution: $\bx_0 \sim \normal_{n_\grid}(\bfmu_{0|0},\bfSigma_{0|0})$. The noise covariance matrix $\bR_t$ will be assumed to be diagonal or block-diagonal here for simplicity (see Assumption \ref{ass:hr} in Section \ref{sec:sparsity}). No computationally convenient structure is assumed for the innovation covariance matrix $\bQ_t$. The observation noise $\bv_t$ and the innovation $\bw_t$ are mutually and serially independent, and independent of the state $\bx_{t-1}$. We assume that all matrices in \eqref{eq:obs}--\eqref{eq:evol} (and $\bfmu_{0|0}$ and $\bfSigma_{0|0}$) are known. The case of unknown parameters is discussed in Section \ref{sec:parinference}. 
 
The observation matrix $\bH_t$ relates the state to the observations. This enables combining observations from different instruments or modeling areal observations given by averaging over certain elements of the state vector. Here, we usually assume point-level measurements for simplicity, although a block-diagonal form for $\bH_t$ is possible (see Assumption \ref{ass:hr}). 

The evolution matrix $\levol_t$ determines how the process evolves over time. It can be specified in terms of a system of partial differential equations (PDEs), may depend on other variables, or --- in the absence of further information --- could simply be a scaled identity operator indicating a random walk over time. We assume that the evolution is local and $\levol_t$ is sparse (Assumption \ref{ass:local} in Section \ref{sec:comp-complexity}).

Note that the SSM in \eqref{eq:obs}--\eqref{eq:evol}, which is a latent Markov model of order 1, is very general and can describe a broad class of systems. Higher-order Markov models can also be written in the form \eqref{eq:obs}--\eqref{eq:evol} by expanding the state space. Non-Gaussian observations can often be transformed to be approximately Gaussian. Other extensions are also straightforward, such as letting the grid $\grid$ vary over time.

\subsection{Filtering inference using the Kalman filter (KF) \label{sec:kf}}

We are interested in filtering inference on the state $\bx_t$. That is, at each time $t$, the goal is to find the conditional distribution of $\bx_t$ given all observations up to and including time $t$, denoted by $\bx_t | \by_{1:t}$, where $\by_{1:t} = (\by_1',\ldots,\by_t')'$.

For the linear Gaussian SSM in \eqref{eq:obs}--\eqref{eq:evol}, the filtering distributions are Gaussian. These filtering distributions can be obtained recursively for $t=1,2,\ldots$ using the Kalman filter \citep{Kalman1960}, which consists of a forecast step and an update step at each time point. Denote the filtering distribution at time $t-1$ by $\bx_{t-1} | \by_{1:t-1} \sim \normal_n(\bfmu_{t-1|t-1},\bfSigma_{t-1|t-1})$. The forecast step obtains the forecast or prior distribution of $\bx_t$ based on the previous filtering distribution and the evolution model \eqref{eq:evol} as
\begin{equation*}
\bx_{t} | \by_{1:t-1} \sim \normal_{n_\grid}(\bfmu_{t|t-1},\bfSigma_{t|t-1}), \qquad \bfmu_{t|t-1} \colonequals \levol_t\bfmu_{t-1|t-1}, \quad \bfSigma_{t|t-1} \colonequals \levol_t \bfSigma_{t-1|t-1} \levol_t' + \bQ_t.
\end{equation*}
Then, the update step modifies this forecast distribution based on the observation vector $\by_t$ and the observation equation \eqref{eq:obs}, in order to obtain the filtering distribution of $\bx_t$:
\begin{equation}
\label{eq:update}
\bx_{t} | \by_{1:t} \sim \normal_{n_\grid}(\bfmu_{t|t},\bfSigma_{t|t}), \qquad \bfmu_{t|t} \colonequals \bfmu_{t|t-1} + \bK_t(\by_t - \bH_t \bfmu_{t|t-1}), \quad \bfSigma_{t|t} \colonequals (\bI_{n_\grid} - \bK_t \bH_t) \bfSigma_{t|t-1},
\end{equation}
where $\bK_t \colonequals \bfSigma_{t|t-1}\bH_t'(\bH_t\bfSigma_{t|t-1} \bH_t'+\bR_t)^{-1}$ is the $n_\grid \times n_t$ Kalman gain matrix.

While the Kalman filter provides the exact solution to our filtering problem, it requires computing and propagating the $n_\grid \times n_\grid$ covariance matrix $\bfSigma_{t|t}$ and decomposing the $n_t \times n_t$ matrix $(\bH_t\bfSigma_{t|t-1} \bH_t'+\bR_t)$ in $\bK_t$, and is thus computationally infeasible for large $n_\grid$ or large $n_t$. Therefore, approximations are required for large spatio-temporal data.

\section{The multi-resolution filter (MRF) \label{sec:mrf}}

\subsection{Overview}

We now propose a multi-resolution filter (MRF) for spatio-temporal SSMs of the form \eqref{eq:obs}--\eqref{eq:evol} when the grid size $n_\grid$ or the data sizes $n_t$ are large, roughly between $10^4$ and $10^9$. The MRF can be viewed as an approximation of the Kalman filter in Section \ref{sec:kf}. 

The most important ingredient of the MRF is a novel multi-resolution decomposition (MRD). Given a spatial covariance matrix $\bfSigma$, the MRD computes $\bB = \mrd(\bfSigma)$ such that $\bfSigma \approx \bB \bB'$. We will describe the MRD in detail in Section \ref{sec:mrd}. For now, we merely note that the MRD algorithm is fast, and the resulting multi-resolution factor $\bB$ is of the same dimensions as $\bfSigma$ but exhibits a particular block-sparse structure (see Figure \ref{fig:sparsity-B}).

The MRF algorithm proceeds as follows:
\begin{framed}
\begin{algorithm}
\label{alg:mrf} \textbf{Multi-resolution filter (MRF)}\\
At the initial time $t=0$, compute $\bB_{0|0} = \mrd(\bfSigma_{0|0})$. Then, for each $t=1,2,\ldots$, do:
\begin{description}[topsep=1pt,itemsep=1pt,parsep=1pt]
\item[1. Forecast Step:] Apply the evolution matrix $\levol_t$ to obtain $\bfmu_{t|t-1} = \levol_t \bfmu_{t-1|t-1}$ and $\bB_{t|t-1}^F = \levol_t\bB_{t-1|t-1}$. Carry out a multi-resolution decomposition $\bB_{t|t-1} = \mrd(\bfSigma_{t|t-1}^F)$, where $\bfSigma_{t|t-1}^F=\bB_{t|t-1}^F (\bB^F _{t|t-1})' + \bQ_t$, to obtain $\bx_t | \by_{1:t-1} \sim \normal_{n_\grid}(\bfmu_{t|t-1},\bfSigma_{t|t-1})$ with $\bfSigma_{t|t-1} = \bB_{t|t-1}\bB_{t|t-1}'$.
\item[2. Update Step:] Compute $\bB_{t|t} \colonequals \bB_{t|t-1} (\bL_t^{-1})'$, where $\bL_t$ is the lower Cholesky triangle of $\bfLambda_t \colonequals \bI_{n_\grid} + \bB_{t|t-1}' \bH_t' \bR_t^{-1} \bH_t \bB_{t|t-1}$, to obtain $\bx_t |\by_{1:t} \sim \normal_{n_\grid}(\bfmu_{t|t},\bfSigma_{t|t})$ with $\bfSigma_{t|t} = \bB_{t|t}\bB_{t|t}'$ and $\bfmu_{t|t}  = \bfmu_{t|t-1} + \bB_{t|t}\bB_{t|t}'\bH_t'\bR_t^{-1}(\by_t-\bH_t\bfmu_{t|t-1})$.
\end{description}
\end{algorithm}
\end{framed}
The key to the scalability of this algorithm is that while $\bfSigma_{t|t-1}$ and $\bfSigma_{t|t}$ are large and dense matrices, they are never explicitly calculated and instead represented by the block-sparse matrices $\bB_{t|t-1}$ and $\bB_{t|t}$, respectively. Also, as shown in Section \ref{sec:comp-cost}, $\bB_{t|t}$ has the same sparsity structure as $\bB_{t|t-1}$, which allows the cycle to start over for the next time point $t+1$.
The forecast step and update step will be discussed in more detail in Sections \ref{sec:mrfforecast} and \ref{sec:mrfupdate}, respectively.

\subsection{Details of the MRF forecast step \label{sec:mrfforecast}} 

Assume that we have obtained the filtering distribution $\bx_{t-1}|\by_{1:t-1} \sim \normal_n(\bfmu_{t-1|t-1}, \bfSigma_{t-1|t-1})$, where $\bfSigma_{t-1|t-1} =\bB_{t-1|t-1}\bB_{t-1|t-1}'$ and $\bB_{t-1|t-1}$ is a block-sparse matrix. Following the forecast step of the standard Kalman filter, we want to obtain the prior distribution at time $t$, $\mathbf{x}_{t}|\mathbf{y}_{1:t-1} \sim \normal_n(\bm{\mu}_{t|t-1}, \bm{\Sigma}_{t|t-1})$. 

Because of the sparsity of $\levol_t$ (see Assumption \ref{ass:local} in Section \ref{sec:comp-complexity}), computing the forecast mean $\bfmu_{t|t-1} = \levol_t \bfmu_{t-1|t-1}$ and the forecast basis matrix $\bB_{t|t-1}^F = \levol_t\bB_{t-1|t-1}$ is fast. Then, rather than calculating the dense $n_\grid \times n_\grid$ forecast covariance matrix $\bfSigma_{t|t-1}^F = \bB_{t|t-1}^F (\bB^F _{t|t-1})' + \bQ_t$ explicitly, we obtain its multi-resolution decomposition $\bB_{t|t-1} = \mrd(\bfSigma_{t|t-1}^F)$ as described in Section \ref{sec:mrd}. This implies an approximation to the prior covariance matrix as $\bfSigma_{t|t-1} = \bB_{t|t-1}\bB_{t|t-1}'$. Again, $\bfSigma_{t|t-1}$ does not need to be computed explicitly, because only $\bB_{t|t-1}$ is used in the update step below.

\subsection{Details of the MRF update step \label{sec:mrfupdate}}

The objective of the update step is to compute the posterior distribution
$
\bx_{t} | \by_{1:t} \sim \normal_{n_\grid}(\bfmu_{t|t},\bfSigma_{t|t})
$
given the prior quantities $\bfmu_{t|t-1}$ and $\bB_{t|t-1}$ (such that $\bfSigma_{t|t-1} = \bB_{t|t-1} \bB_{t|t-1}'$) obtained in the forecast step. 

Following the Kalman filter update in \eqref{eq:update}, we have 
\begin{align*}
\label{eq:sigtt}
\bfSigma_{t|t} 
 & = (\bI_{n_\grid} - \bK_t \bH_t) \bfSigma_{t|t-1} \\
 & = \bB_{t|t-1} \big( \bI_{n_\grid} - \bB_{t|t-1}'\bH_t'(\bH_t\bB_{t|t-1}\bB_{t|t-1}'\bH_t' + \bR_t)^{-1}\bH_t\bB_{t|t-1} \big) \bB_{t|t-1}' \\
 & = \bB_{t|t-1} \big(\bI_{n_\grid} + \bB_{t|t-1}' \bH_t' \bR_t^{-1} \bH_t \bB_{t|t-1} \big)^{-1} \bB_{t|t-1}' \\
 & = \bB_{t|t-1} \bfLambda_t^{-1} \bB_{t|t-1}' = \bB_{t|t}\bB_{t|t}',
\end{align*}
where $\bB_{t|t} \colonequals \bB_{t|t-1} (\bL_t^{-1})'$, $\bL_t$ is the lower Cholesky triangle of $\bfLambda_t \colonequals \bI_{n_\grid} + \bB_{t|t-1}' \bH_t' \bR_t^{-1} \bH_t \bB_{t|t-1}$, and we have applied the Sherman-Morrison-Woodbury formula \citep[e.g.,][]{Henderson1981} to $\bfLambda_t$. 

To obtain the filtering mean, we use the Searle set of identities \citep[][p.~151]{Searle1982}, to write the Kalman gain as 
\begin{align*}
\bK_t & = \bfSigma_{t|t-1}\bH_t'(\bH_t\bfSigma_{t|t-1} \bH_t'+\bR_t)^{-1}\\
 & = \bB_{t|t-1}\bB_{t|t-1}'\bH_t'(\bH_t\bB_{t|t-1}\bB_{t|t-1}'\bH'_t+\bR_t)^{-1}\\
 & = \bB_{t|t-1} (\bI_{n_\grid} + \bB_{t|t-1}' \bH_t' \bR_t^{-1} \bH_t \bB_{t|t-1})^{-1} \bB_{t|t-1}'\bH_t'\bR_t^{-1}\\
 & = \bB_{t|t-1} \bfLambda_t^{-1} \bB_{t|t-1}'\bH_t'\bR^{-1} = \bB_{t|t}\bB_{t|t}'\bH_t'\bR_t^{-1},
\end{align*}
and so we have 
\begin{align*}
\bfmu_{t|t} & = \bfmu_{t|t-1} + \bK_t(\by_t-\bH_t\bfmu_{t|t-1}) \\
 & = \bfmu_{t|t-1} + \bB_{t|t}\bB_{t|t}'\bH_t'\bR_t^{-1}(\by_t-\bH_t\bfmu_{t|t-1}).
\end{align*}
Thus, the MRF update step in Algorithm \ref{alg:mrf} is exact for given $\bfmu_{t|t-1}$ and $\bfSigma_{t|t-1} = \bB_{t|t-1} \bB_{t|t-1}'$.
Crucially, we will show in Proposition \ref{prop:pd} that $\bB_{t|t}$ has the same sparsity structure as $\bB_{t|t-1}$, and hence it satisfies the block-sparsity assumption at the beginning of Section \ref{sec:mrfforecast}.

\subsection{The multi-resolution decomposition \label{sec:mrd}}

We now propose an approximate multi-resolution decomposition (MRD) of a generic spatial covariance matrix $\bfSigma$, which is used in the forecast step of the MRF in Algorithm \ref{alg:mrd}. Specifically, we consider a vector $\bx=\big(x(\bg_1),\ldots,x(\bg_{n_\grid})\big)' \sim \normal_{n_\grid}(\bfzero,\bfSigma)$, evaluated at a grid $\grid=\{ \bg_1, \dots, \bg_{n_\grid} \}$ over the spatial domain $\domain$. The MRD is based on a multi-resolution approximation of Gaussian processes \citep{Katzfuss2016mra} --- see Section \ref{sec:mra} for more details.

\subsubsection{Partitioning and knots \label{sec:knots}}

The MRD requires a domain partitioning and selection of knots at $M$ resolutions.
Consider a recursive partitioning of $\domain$ into $J$ regions, $\domain_1,\ldots,\domain_J$, each of which is again divided into $J$ smaller subregions (e.g., $\domain_2$ is split into subregions $\domain_{21},\ldots,\domain_{2J}$), and so forth, up to resolution $M$. We write this as
\[
\domain_{j_1,\ldots,j_{m}} = \dot{\cup}_{j_{m+1}=1}^J \, \domain_{j_1,\ldots,j_{m+1}}, \qquad (\jm) \in \{1,\ldots,J\}^m, \quad m=0,\ldots,M-1.
\]
Let $\grid_\jm = \grid \cap \domain_\jm$ 
be the grid points that lie in region $\domain_\jm$, and let $\gridind_\jm = \{i: \bg_i \in \domain_\jm \}$ be the corresponding indices, and so $\gridind=\{1, \ldots, n_\grid\}$.

\begin{figure}
	\centering
	\includegraphics[width=1.0\textwidth]{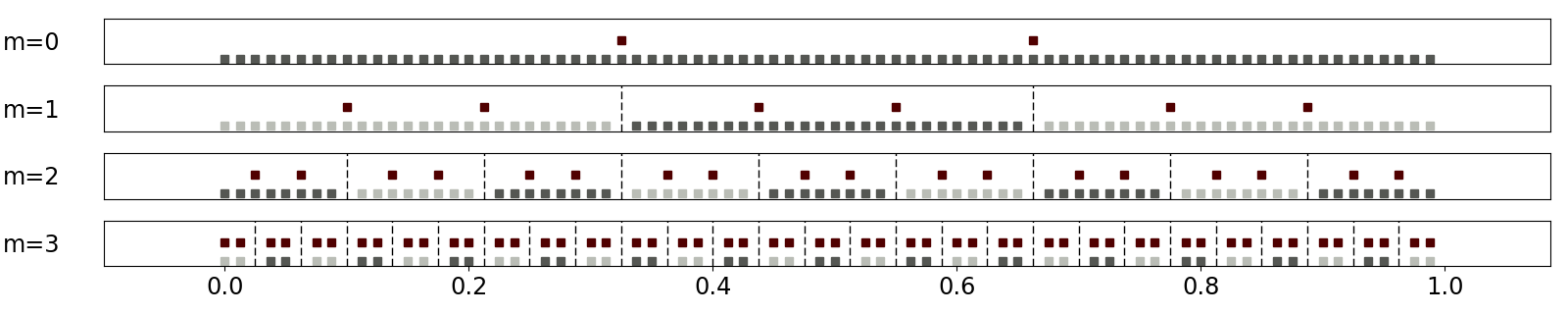}
	\caption{Illustration of knot placement for a regular grid of $n_\grid=80$ points on a one-dimensional domain $\domain$ (x-axis). Recursively for $m=0,1,\ldots,M$ (with $M=3$ here), each region is split into $J=3$ subregions (dashed lines), with $r_m=2$ knots per region (maroon dots). Grid points used as knots at resolutions $m$ are not plotted on finer ($>m$) resolutions for clarity.}
	\label{fig:knotexample}
\end{figure}

Further, we require a hierarchy of ``knot'' indices, such that $\knots_\jm$ is a small set of $r_m$ indices chosen from $\gridind_\jm$. It is assumed that for each resolution $m$, the number of knots is roughly the same in each subregion (i.e., $|\knots_\jm|=r_m$), while it may change across resolutions. We use $\knots^m=\bigcup_{\jm} \knots_\jm$ to denote the set of all knots at resolution $m$, and define $\knots^{0:m} = \bigcup_{l=0}^m \knots^l$ as the set of all knots at resolutions 0 through $m$.
To ensure that $\{\knots_\jm: (\jm) \in \{1,\ldots,J\}^m; m=0,1,\ldots,M \}$ is a partition of $\{1,\ldots,n_\grid\}$, we sequentially choose $\knots_\jm \subset (\gridind_\jm \setminus \knots^{0:m-1})$ for $m=0,1,\ldots,M$. 

In practice, we often choose $J=2$ or $J=4$. Each $r_{m-1}$ should be sufficiently large to capture the dependence between the $\domain_\jm$ that is not already captured at lower resolutions, which often means that $r_m$ can decrease as a function of $m$. Each set of knots $\knots_\jm$ could be chosen as a roughly uniform grid over the subregion $\domain_\jm$. The partitioning and knot selection is illustrated in a toy example in Figure \ref{fig:knotexample}.

Note that because $\grid$ is assumed constant over time here, we only need to do this partitioning and selection of knots once (not at each time point). We also assume that the elements in $\bx_t$ are ordered such that if $(\jM) \succ_L (i_1, \dots i_M)$, where $\succ_L$ stands for lexicographic ordering, then $\min\left(\gridind_\jM\right) > \max\left(\gridind_{i_1, \dots, i_M}\right)$.

\subsubsection{The MRD algorithm}

For index sets $\Jind_1$ and $\Jind_2$, denote by $\bfSigma[\Jind_1,\Jind_2]$ the submatrix of $\bfSigma$ obtained by selecting the $\Jind_1$ rows and $\Jind_2$ columns, and $\bfSigma[\Jind_1,:\,]$ is the submatrix of the $\Jind_1$ rows and all columns. Based on grid indices $\{\gridind_\jm\}$ and knot indices $\{\knotind_\jm\}$ selected as described in Section \ref{sec:knots}, the multi-resolution decomposition of a spatial covariance matrix $\bfSigma$ proceeds as follows:
\begin{framed}
\begin{algorithm}
\label{alg:mrd} \textbf{Multi-resolution decomposition of $\bfSigma$}\\
For $m=0,1,\ldots,M$ and $(\jm) \in \{1,\ldots,J\}^m$:
\begin{itemize}[topsep=1pt,itemsep=1pt,parsep=1pt]
\item For $\ell=0,\ldots,m$, compute
\begin{align}
\bW^\ell_\jm & = \textstyle\bfSigma[\gridind_\jm,\knotind_\jl] - \sum_{k=0}^{\ell-1} \bW_\jm^k(\bV_\jk^k)^{-1} (\bV_\jl^k)' \label{eq:W}\\ 
\bV^\ell_\jm & = \textstyle\bfSigma[\knotind_\jm,\knotind_\jl] - \sum_{k=0}^{\ell-1} \bV_\jm^k(\bV_\jk^k)^{-1} (\bV_\jl^k)'. \label{eq:V}
\end{align}
\item Set $\bB_\jm = \bW^m_\jm (\bV_\jm^m)^{-1/2}$.
\end{itemize}
Return $\bB = \mrd(\bfSigma)$, where $\bB = \left( \bB^M,\bB^{M-1},\ldots,\bB^0 \right)$ with $\bB^m = \blockdiag(\{\bB_\jm: (\jm) \in \{1,\ldots,J\}^m \})$.
\end{algorithm}
\end{framed}
The resulting matrix $\bB$ is of the same size as $\bfSigma$ but has a block-sparse structure, which is illustrated in Figure \ref{fig:sparsity-B}.

\begin{figure}
	\begin{subfigure}{.32\textwidth}
	\centering
	\includegraphics[width = 0.8\linewidth]{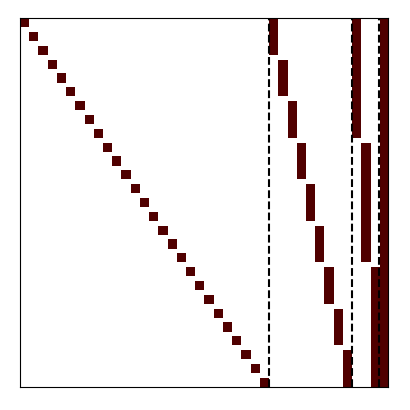}
	\caption{$\bB = \mrd(\bfSigma)$}
	\label{fig:sparsity-B}
	\end{subfigure}
	\begin{subfigure}{.32\textwidth}
	\centering
	\includegraphics[width=0.8\linewidth]{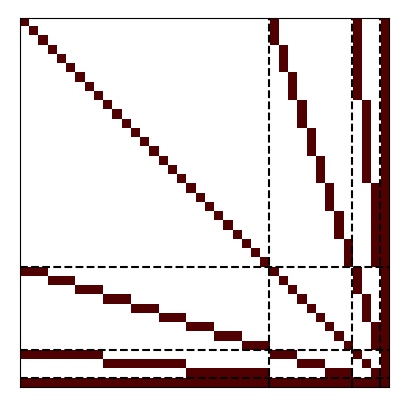}
	\caption{$\bB'\bB$}
	\label{fig:sparsity-BB}
	\end{subfigure}
	\begin{subfigure}{.32\textwidth}
	\centering
	\includegraphics[width = 0.8\linewidth]{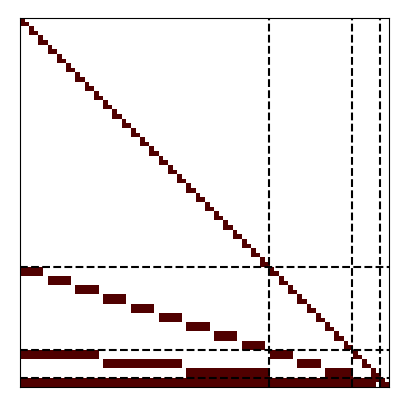}
	\caption{$\bL = \text{chol}(\bB'\bB)$ and $\bL^{-1}$}
	\label{fig:sparsity-L}
	\end{subfigure}
	\caption{Sparsity patterns for $n_\grid=80$, $M=3, \, J=3, \,$ and $r_m=2$ for $m=0, \dots, 3$. Rows and columns correspond to hierarchically arranged elements of the grid $\mathcal{G}$ in Figure \ref{fig:knotexample} from resolution $m=3$ down to $m=0$. Blocks corresponding to different resolutions are separated by dashed lines. Relating Figure \ref{fig:sparsity-B} to Figure \ref{fig:priorbf}, we see the transition from many sparse columns corresponding to many compactly supported basis functions ($m=3$), to only two dense columns corresponding to two basis functions supported over the entire domain ($m=0$).}
	\label{fig:sparsity}
\end{figure}

\section{Properties of the multi-resolution filter} \label{sec:properties}

\subsection{Approximation accuracy\label{sec:accuracy}}

The only difference between the MRF (Algorithm \ref{alg:mrf} and the exact Kalman filter (Section \ref{sec:kf}) is the MRD approximation of the forecast covariance matrix at each time point; that is, instead of taking $\bfSigma_{t|t-1} = \bfSigma_{t|t-1}^F$, the MRF assumes $\bfSigma_{t|t-1} = \bB_{t|t-1}\bB_{t|t-1}'$ with $\bB_{t|t-1} = \mrd(\bfSigma_{t|t-1}^F)$.
Hence, the MRF is exact when the MRD at each time point is exact. 

However, the MRD is only exact in certain special cases. One trivial example is given by $M=0$ and $r_0=n_\grid$ (see Section S1). Thus, the MRF converges to the exact Kalman filter as $r_0 \rightarrow n_\grid$, but computational feasibility for large $n_\grid$ relies on $r_0 \ll n_\grid$. Another instance of exactness is when $\bfSigma_{t|t-1}^F$ is based on an exponential covariance function on a one-dimensional domain, $\domain \subset \mathbb{R}$, and we place a total of $r_m=J-1$ knots, one at each subregion boundary \citep[][Prop.~6]{KatzfussGong}. 
Figure \ref{fig:knotexample} provides an example of such knot placement.
Finally, approximation error can also be avoided when $\levol_t = c_t \bI_{n_\grid}$ with $c_t \in \reals^+$ and $\bQ_t=\bfzero$. In this case we can set $\bB_{t|t-1} \colonequals \sqrt{c_t} \,\bB_{t-1|t-1}$, rather than employing the MRD in the forecast step. In data assimilation, the assumption $\bQ_t=\bfzero$ is quite common, when model error is incorporated through multiplicative inflation of the forecast covariance matrix.

Aside from these special cases, the MRD and hence the MRF are approximate. However, the MRA, which is the technique underlying the MRD (see Section \ref{sec:mra}), can provide excellent accuracy, as has been shown, for example, by \citet{Katzfuss2016mra}, \citet{KatzfussGong} and in a recent comparison of different methods for large spatial data \citep{Heaton2017}. In applications where accuracy is crucial, one could successively increase the number of knots $r_m$ used at low resolutions until the inference ``converges.'' We demonstrate the MRF's accuracy numerically in Sections \ref{sec:simulations} and \ref{sec:data-application}. In practice, the SSM in \eqref{eq:obs}--\eqref{eq:evol} will usually be an approximation to the true system, and we expect the MRD approximation error to often be negligible relative to the error due to model misspecification.

\subsection{Computational complexity \label{sec:comp-cost}}

We now determine the memory and time complexity of the MRF algorithm under the assumption that $n=\order(n_\grid) = \order(n_t)$ for all $t=1,2,\ldots$. We  also define $N\colonequals \sum_m r_m$.

\subsubsection{Sparsity and memory requirements \label{sec:sparsity}}

As can be seen in Algorithm $\ref{alg:mrd}$, a multi-resolution factor is composed of block-diagonal submatrices by construction. The following proposition quantifies the number of its nonzero elements. 
\begin{proposition}
\label{prop:sparsity-B}
For a covariance matrix $\bfSigma$, each row of $\bB = \mrd(\bfSigma)$ has $N$ nonzero elements.
\end{proposition}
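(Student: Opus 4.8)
The plan is to track, row by row, which columns of $\bB = \mrd(\bfSigma)$ can be nonzero, and to show the count is exactly $N = \sum_m r_m$ regardless of which row we pick. Fix a grid index $i \in \{1,\ldots,n_\grid\}$. By the knot-partition construction in Section \ref{sec:knots}, $i$ belongs to exactly one ``leaf path'': there is a unique tuple $(j_1,\ldots,j_M)$ with $i \in \gridind_{j_1,\ldots,j_M}$, and moreover $i$ lies in $\gridind_{j_1,\ldots,j_m}$ for every prefix $m = 0,1,\ldots,M$ of that path (regions are nested). The columns of $\bB$ are grouped by resolution, $\bB = (\bB^M, \bB^{M-1}, \ldots, \bB^0)$, and within resolution $m$ the block $\bB^m = \blockdiag(\{\bB_{j_1,\ldots,j_m}\})$ has the $\gridind_{j_1,\ldots,j_m}$ rows aligned with the $\knotind_{j_1,\ldots,j_m}$ columns. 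So in row $i$, the only columns of $\bB^m$ that can be nonzero are those indexed by $\knotind_{j_1,\ldots,j_m}$ for the unique prefix $(j_1,\ldots,j_m)$ of $i$'s path — all other diagonal blocks at resolution $m$ have a zero in row $i$.

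The next step is to count: at resolution $m$, row $i$ sees at most $|\knotind_{j_1,\ldots,j_m}| = r_m$ potentially-nonzero entries, namely the entries of the single $1 \times r_m$ row $\bB_{j_1,\ldots,j_m}[\,i,:\,]$. Summing over $m = 0,1,\ldots,M$ gives at most $\sum_{m=0}^M r_m = N$ nonzeros in row $i$. That these are generically exactly $N$ and not fewer follows because $\bB_{j_1,\ldots,j_m} = \bW^m_{j_1,\ldots,j_m}(\bV^m_{j_1,\ldots,j_m})^{-1/2}$, and $\bW^m_{j_1,\ldots,j_m}[\,i,:\,]$ is a full row of an $|\gridind_{j_1,\ldots,j_m}| \times r_m$ matrix (a Schur-complement-type correction of $\bfSigma[\gridind_{j_1,\ldots,j_m},\knotind_{j_1,\ldots,j_m}]$) with no structural zeros, and right-multiplication by the invertible $r_m \times r_m$ matrix $(\bV^m_{j_1,\ldots,j_m})^{-1/2}$ preserves this. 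Since the proposition as stated asserts ``$N$ nonzero elements,'' I read it as counting structural nonzeros (the block-sparsity pattern), so the essential content is the ``at most $N$'' direction together with the observation that the pattern is the same in every row — i.e., every row hits exactly one block at each of the $M+1$ resolutions.

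The main obstacle — really the only thing requiring care — is verifying the nesting claim that $i \in \gridind_{j_1,\ldots,j_M}$ implies $i \in \gridind_{j_1,\ldots,j_m}$ for all shorter prefixes, and that correspondingly the knot sets $\knotind_{j_1,\ldots,j_m}$ encountered along $i$'s path are the unique ones contributing to row $i$ at each resolution. This is immediate from the recursive partition $\domain_{j_1,\ldots,j_m} = \dot\cup_{j_{m+1}} \domain_{j_1,\ldots,j_{m+1}}$ and the fact that $\knotind_{j_1,\ldots,j_m} \subseteq \gridind_{j_1,\ldots,j_m}$, but it is worth stating explicitly because it is exactly what makes the per-row count independent of $i$. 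Everything else — the block-diagonal structure of each $\bB^m$, the dimensions of $\bW^m_{j_1,\ldots,j_m}$ and $\bV^m_{j_1,\ldots,j_m}$ — is read directly off Algorithm \ref{alg:mrd}.
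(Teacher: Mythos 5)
Your proposal is correct and follows essentially the same route as the paper's proof: both exploit the decomposition $\bB = (\bB^M,\ldots,\bB^0)$ and the block-diagonality of each $\bB^m$ to conclude that row $i$ meets exactly one $r_m$-wide block at each of the $M+1$ resolutions, giving $\sum_m r_m = N$ nonzeros per row. Your extra care about the nested leaf path and the distinction between structural and generic nonzeros is a more explicit spelling-out of what the paper leaves implicit, but it is not a different argument.
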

Thus, if $r_m=r$ for $m=0, \dots, M$, then each row of $\bB$ has exactly $(M+1)r$ nonzero elements. Figure \ref{fig:sparsity-B} illustrates this case for $M=3, J=3$ and $r=2$. 
The MRD results in a convenient structure of the inner product of the multi-resolution factor.  The following statement describes the sparsity pattern of this inner product (see Figure \ref{fig:sparsity-BB}), while Proposition $\ref{prop:pd}$ shows its usefulness in applications to filtering problems. 
\begin{proposition}
\label{prop:blockBB}
Let $\bB=\mrd(\bfSigma)$ for some covariance matrix $\bfSigma$. Then $\bB'\bB$ is a block matrix with $M+1$ row blocks and $M+1$ column blocks. For $k,l = 0,\ldots,M$ with $k\geq l$, the $(M+1-k,M+1-l)$-th block is of dimension $|\knots^k| \times |\knots^l|$ and is itself block-diagonal with blocks that are $r_l$ columns wide.
\end{proposition}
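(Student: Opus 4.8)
The plan is to work directly from the block structure $\bB = (\bB^M, \bB^{M-1}, \ldots, \bB^0)$ produced by Algorithm \ref{alg:mrd}, where $\bB^m = \blockdiag(\{\bB_\jm\})$ and each $\bB_\jm = \bW_\jm^m (\bV_\jm^m)^{-1/2}$ has rows indexed by $\gridind_\jm$ and columns indexed by $\knotind_\jm$. Computing $\bB'\bB$ block by block, the $(M+1-k, M+1-l)$-th block is $(\bB^k)'\bB^l$, which has rows indexed by $\knots^k = \bigcup_\jk \knotind_\jk$ and columns indexed by $\knots^l = \bigcup_\jl \knotind_\jl$, so the stated dimension $|\knots^k|\times|\knots^l|$ is immediate. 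The content is the block-diagonality claim: I want to show that the $(\jk,\jl)$-subblock $(\bB_\jk^k)'\bB_\jl^l$ — of size $r_k \times r_l$ — vanishes unless the region $\domain_\jl$ (at resolution $l \le k$) is an ancestor of $\domain_\jk$, i.e.\ unless $(j_1,\ldots,j_l)$ is a prefix of $(j_1,\ldots,j_k)$.

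First I would fix $k \ge l$ and the two index tuples, and observe that $(\bB_\jk^k)'\bB_\jl^l = (\bV_\jk^k)^{-1/2}\,(\bW_\jk^k)'\,\bW_\jl^l\,(\bV_\jl^l)^{-1/2}$, a contraction of matrices whose nonzero row sets are $\gridind_\jk$ and $\gridind_\jl$ respectively (viewing $\bW_\jk^k$ as an $n_\grid \times r_k$ matrix padded with zeros outside $\gridind_\jk$, which is how the block-diagonal assembly works). Hence $(\bW_\jk^k)'\bW_\jl^l$ — and therefore the whole subblock — is automatically zero whenever $\gridind_\jk \cap \gridind_\jl = \emptyset$. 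Since the partitioning is nested, for $l \le k$ the regions $\domain_\jk$ and $\domain_\jl$ are either nested (when $(j_1,\ldots,j_l)$ is a prefix of $(j_1,\ldots,j_k)$, giving $\gridind_\jk \subseteq \gridind_\jl$) or disjoint (otherwise). So the only potentially nonzero subblocks in row-resolution $k$, column-resolution $l$ are those where the resolution-$l$ index is the length-$l$ prefix of the resolution-$k$ index. Writing the rows of the $k$-th resolution in lexicographic order of $\jk$ and the columns of the $l$-th resolution in lexicographic order of $\jl$, each column group (a fixed $\jl$, width $r_l$) meets exactly the contiguous run of row groups whose prefix is that $\jl$; since the ordering assumption at the end of Section \ref{sec:knots} makes these runs contiguous and non-overlapping across distinct $\jl$, the block $(\bB^k)'\bB^l$ is block-diagonal with blocks $r_l$ columns wide, as claimed.

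The one step that needs a little care — and the place I'd expect to spend the most effort — is confirming that the column-support of $\bW_\jm^\ell$ really is contained in $\gridind_\jm$ for all $\ell \le m$, not just $\ell = m$, because the recursion \eqref{eq:W} defines $\bW_\jm^\ell$ in terms of $\bfSigma[\gridind_\jm, \knotind_\jl]$ and lower-order terms $\bW_\jm^k$. But each term on the right of \eqref{eq:W} has its row index set equal to $\gridind_\jm$ (the leading term by definition, and every correction term $\bW_\jm^k(\cdots)$ inherits the row set $\gridind_\jm$ from its own leading factor by induction on $\ell$), so this goes through by a short induction. With that in hand, the only matrix actually assembled into $\bB^m$ is $\bB_\jm = \bW_\jm^m(\bV_\jm^m)^{-1/2}$, whose row support is $\gridind_\jm$, and the argument above applies verbatim. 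I would also remark that the block-diagonal sub-blocks need not be square — the $(\jk,\jl)$-block present along the diagonal is $r_k \times r_l$ — which is consistent with the proposition stating only the column width $r_l$.
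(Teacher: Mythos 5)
Your proof is correct and takes essentially the same route as the paper's: both reduce to the blocks $(\bB^k)'\bB^l$ and exploit the nestedness of the recursive partition, the paper by regrouping the finer-resolution factor into block-diagonal form with blocks of height $|\gridind_\jk|$ and multiplying matching block-diagonal matrices, you by noting that the $(\jk,\jl)$ subblock vanishes unless $\jl$ is a prefix of $\jk$ (disjoint row supports) and then using the hierarchical/lexicographic ordering for contiguity. Your extra verification concerning $\bW^\ell_\jm$ is harmless but unnecessary, since only $\bB_\jm=\bW^m_\jm(\bV^m_\jm)^{-1/2}$, whose rows are indexed by $\gridind_\jm$ by construction, is assembled into $\bB$.
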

The following technical assumption ensures that both $\bH_t$ and $\bR_t$ are block-diagonal with blocks corresponding to indices $\gridind_{\jM}$ within each of the finest subregions:
\begin{assumption}
\label{ass:hr}
Let $i \in \gridind_\iM$ and $j \in \gridind_\jM$. Assume $\bR_t[i,j]=0$, unless $(\iM) = (\jM)$. Further, if $\bH_t[i,j] \neq 0$, then $\bH_t[i,k] = 0 $ for all $k \notin \gridind_\jM$. Finally, if $i_1,i_2 \in \gridind_\jM$ and $i_1<i_2$, then for all $i_3$ with $i_1 < i_3 < i_2$, we have $i_3 \in \gridind_\jM$.
\end{assumption}
 This assumption guarantees the key property of the MRF: The sparsity pattern of the multi-resolution factor is preserved in the update step; that is, $\bB_{t|t} \in \Sp(\bB_{t-1|t-1})$, where $\Sp(\bG)$ denotes the set of matrices whose set of structural zeros is the same or a superset of the structural zeros in some matrix $\bG$. We also use $\bG^L$ to denote the lower triangle of $\bG$, meaning that
 $
 \textstyle{\bG^L[i,j] = \bG[i,j] \text{ if } i\geq j \text{, and } \bG^L[i,j] = 0 \text{ otherwise}.}
 $
\begin{proposition} 
\label{prop:pd}
Let $\bB_{t|t-1}, \bB_{t|t}, \prec_t, \bL_t$ be defined as in Algorithm \ref{alg:mrf}. Under Assumption \ref{ass:hr}, we have:
\begin{enumerate}
	\item $\prec_t \in \Sp(\bB_{t|t-1}'\bB_{t|t-1})$;
	\item $\bL_t \in \Sp(\prec_t^L)$ and $\bL_t^{-1} \in \Sp(\prec_t^L)$;
	\item $\bB_{t|t} \in \Sp(\bB_{t|t-1})$.
\end{enumerate}
\end{proposition}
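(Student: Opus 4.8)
The plan is to prove the three claims in order, since each feeds into the next. The sparsity structure described in Propositions~\ref{prop:sparsity-B} and~\ref{prop:blockBB} is the backbone: $\bB_{t|t-1}$ has $N$ nonzeros per row arranged in block-diagonal blocks at each resolution, and $\bB_{t|t-1}'\bB_{t|t-1}$ inherits the nested block-diagonal structure from Proposition~\ref{prop:blockBB}. For claim~1, I would show that adding $\bB_{t|t-1}'\bH_t'\bR_t^{-1}\bH_t\bB_{t|t-1}$ to $\bI_{n_\grid}$ does not create any structural nonzero outside the structural zeros of $\bB_{t|t-1}'\bB_{t|t-1}$. By Assumption~\ref{ass:hr}, $\bH_t'\bR_t^{-1}\bH_t$ is block-diagonal with blocks supported on the finest subregions $\gridind_{\jM}$, and the contiguity condition in Assumption~\ref{ass:hr} guarantees these blocks do not straddle subregion boundaries in the ordering. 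Since each column of $\bB_{t|t-1}$ restricted to any fixed $\gridind_{\jM}$ either vanishes or is supported precisely on the knot set hierarchy above that subregion, the $(i,j)$ entry of $\bB_{t|t-1}'\bH_t'\bR_t^{-1}\bH_t\bB_{t|t-1}$ can only be nonzero when columns $i$ and $j$ of $\bB_{t|t-1}$ share support on some common finest subregion — which is exactly the condition under which $(\bB_{t|t-1}'\bB_{t|t-1})[i,j]$ is structurally nonzero. Adding $\bI_{n_\grid}$ only touches the diagonal, which is already in the pattern, giving $\prec_t \in \Sp(\bB_{t|t-1}'\bB_{t|t-1})$.

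For claim~2, I would invoke the standard fact that Cholesky factorization preserves sparsity provided there is no ``fill-in,'' i.e., provided the sparsity pattern is such that the elimination graph adds no edges. The key observation is that the block structure of $\prec_t$ from Proposition~\ref{prop:blockBB} — nested block-diagonal, with coarser resolutions appearing later in the ordering and each fine block ``contained'' in the column-support of the coarser blocks — is precisely a perfect elimination ordering (it has the structure of a block-arrow / nested-dissection pattern, where eliminating the leading fine-resolution variables does not connect any previously unconnected coarse variables, because every coarse variable was already connected to all fine variables within its branch). Hence $\bL_t \in \Sp(\prec_t^L)$. Then $\bL_t^{-1} \in \Sp(\prec_t^L)$ follows from the same no-fill-in property applied to forward substitution on a lower-triangular matrix with this pattern: inverting a lower-triangular matrix whose pattern is the lower triangle of a nested block-arrow matrix produces a matrix with the same pattern, because the reachability sets in the corresponding DAG are closed under the block-containment structure.

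For claim~3, I would write $\bB_{t|t} = \bB_{t|t-1}(\bL_t^{-1})'$ and argue column-by-column. Column $j$ of $\bB_{t|t}$ is a linear combination of columns of $\bB_{t|t-1}$ indexed by the structural nonzeros in row $j$ of $(\bL_t^{-1})'$, equivalently column $j$ of $\bL_t^{-1}$; by claim~2 these indices lie in the lower-triangle pattern of $\prec_t$, which by claim~1 and Proposition~\ref{prop:blockBB} means they correspond to knots at the same or finer resolution within the same branch of the partition tree as knot $j$. Since columns of $\bB_{t|t-1}$ in the same branch at a finer-or-equal resolution have support contained in the support of column $j$ (a basis function at resolution $\ell$ is supported on a region containing the supports of all its descendant basis functions at finer resolutions), the linear combination stays within $\Sp(\bB_{t|t-1})$. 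I expect the main obstacle to be claim~2, specifically making rigorous the ``no fill-in'' assertion: one must carefully verify that the particular nested block-diagonal-within-block pattern of $\prec_t$ from Proposition~\ref{prop:blockBB}, combined with the lexicographic/containment ordering of knots fixed in Section~\ref{sec:knots}, really does form a zero-fill elimination ordering — this is where the graph-theoretic machinery alluded to in the introduction enters, likely via an explicit description of the support tree and an induction over resolutions from coarsest (eliminated last) to finest.
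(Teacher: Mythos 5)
Your proposal follows the same three-part skeleton as the paper's proof, and Parts 1 and 3 are essentially the paper's arguments at a different granularity: for Part 1 the paper does the same bookkeeping blockwise, showing $(\bB^k)'(\bH_t'\bR_t^{-1}\bH_t)\bB^l \in \Sp((\bB^k)'\bB^l)$ by re-reading the block-diagonal factors as conformable, and for Part 3 it writes $\tilde{\bB}^k=\sum_{m\geq k}\bB^m\blt^{m,k}$ with $\blt^{m,k}\in\Sp((\bB^m)'\bB^k)$ and shows $\bB^m\blt^{m,k}\in\Sp(\bB^k)$, which is the block-level version of your column-support argument. Where you genuinely diverge is Part 2: the paper does not argue zero fill-in directly, but invokes \citet[Thm.~1]{khare2012sparse}, which yields $\bL_t\in\Sp(\prec_t^L)$ and $\bL_t^{-1}\in\Sp(\prec_t^L)$ simultaneously once the zero pattern is shown to be a homogeneous graph and the row order a Hasse-tree-based elimination scheme (verified in two supplementary lemmas). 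Your route --- identify the pattern of $\prec_t$ with the comparability graph of knots in the partition tree, check that fine-to-coarse elimination is a perfect elimination ordering (the not-yet-eliminated neighbours of a knot are its same-region mates and its ancestors, which are pairwise comparable and hence a clique), and observe that the lower-triangular pattern is transitively closed so DAG reachability creates no new entries in $\bL_t^{-1}$ --- is more elementary and self-contained, at the price of having to establish (or cite) the two classical sparse-elimination facts that the homogeneous-graph theorem packages together; the structural fact doing the work is the same in both approaches, namely that ancestor sets in the partition tree are chains and region containment is transitive.

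One bookkeeping slip in Part 3: the $j$-th column of $\bB_{t|t}=\bB_{t|t-1}(\bL_t^{-1})'$ equals $\bB_{t|t-1}$ times the $j$-th \emph{column} of $(\bL_t^{-1})'$, so it is indexed by the nonzeros of \emph{row} $j$ of $\bL_t^{-1}$, not of its $j$-th column as you state. Row $j$ picks out knots at the same or finer resolution in the same branch --- which is the characterization you actually go on to use, and which gives supports contained in that of column $j$ --- whereas column $j$ of $\bL_t^{-1}$ would pick out ancestors, whose supports contain rather than are contained in it and would not yield the claim. Since your subsequent reasoning uses the correct index set, the argument stands once this transposition is fixed.
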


We state one more proposition that proves useful in determining the computational complexity:
\begin{proposition}
\label{prop:sparsity}
If $\bL_t$ is the lower Cholesky factor of $\prec_t$, then each column of $\bL_t$ has at most $\order(N)$ nonzero elements.
\end{proposition}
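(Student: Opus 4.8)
The plan is to obtain Proposition~\ref{prop:sparsity} as a short counting corollary of the structural results already established, chiefly Propositions~\ref{prop:pd} and~\ref{prop:blockBB}. First, by Proposition~\ref{prop:pd}(2) we have $\bL_t \in \Sp(\prec_t^L)$, so the nonzero set of each column of $\bL_t$ is contained in that of the corresponding column of $\prec_t^L$ (the lower triangle of $\prec_t$); it therefore suffices to bound the number of nonzeros per column of $\prec_t^L$. Moreover, by Proposition~\ref{prop:pd}(1), $\prec_t \in \Sp(\bB_{t|t-1}'\bB_{t|t-1})$, so throughout I can replace $\prec_t$ by $\bB'\bB$ with $\bB = \bB_{t|t-1}$ and work only with the block-sparsity pattern described in Proposition~\ref{prop:blockBB}. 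Adding $\bI_{n_\grid}$ in the definition of $\prec_t$ only modifies already-nonzero diagonal entries, so it leaves the pattern unchanged.

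The second ingredient is the hierarchical ordering of the state imposed in Section~\ref{sec:knots}. Since $\bB = (\bB^M, \bB^{M-1}, \ldots, \bB^0)$, the knots are indexed so that finer resolutions come first; in $\bB'\bB$ the resolution-$M$ knots carry the smallest indices and the resolution-$0$ knots the largest. Fix a column of $\prec_t^L$ corresponding to a knot at resolution $l$ lying in subregion $\domain_{j_1,\ldots,j_l}$. Restricting to the lower triangle forces the row index to be at least the column index, so the only knots that can give a nonzero entry in this column are those at resolutions $0, 1, \ldots, l$; every descendant knot (resolutions $l+1, \ldots, M$) has a strictly smaller index and hence lies in the strict upper triangle. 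This is the observation that turns an a priori $\order(n_\grid)$ column count (a low-resolution column of the full matrix $\bB'\bB$ really can have $\order(n_\grid)$ nonzeros) into an $\order(N)$ one.

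It then remains to count, within each admissible block row $k \in \{0,\ldots,l\}$, the nonzeros of the fixed column. Here I would read off the ``block-diagonal with blocks $r_l$ columns wide'' statement of Proposition~\ref{prop:blockBB} for the block $(M+1-l, M+1-k)$ and transpose it: the resolution-$l$ subregion $\domain_{j_1,\ldots,j_l}$ is nested in exactly one resolution-$k$ subregion, namely its ancestor $\domain_{j_1,\ldots,j_k}$, whose knot set has size $r_k$, so at most $r_k$ entries of the column are nonzero in block row $k$ (for $k = l$ this is the diagonal block, and the same bound $r_l$ applies). Summing over $k$ gives at most $\sum_{k=0}^{l} r_k \le \sum_{m=0}^{M} r_m = N$ nonzeros in the column, and the claim follows from $\bL_t \in \Sp(\prec_t^L)$.

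I expect the only real obstacle to be bookkeeping with the block indices and the ordering: Proposition~\ref{prop:blockBB} is phrased in terms of the upper-triangular blocks ($k \ge l$), so one must carefully transpose it and combine it with the ordering convention of Section~\ref{sec:knots} to see that a single lower-triangular column meets each ancestor block in at most $r_k$ entries, and one should separately verify the boundary cases (the diagonal block and the harmless $\bI_{n_\grid}$ term). Once the indexing is pinned down, no computation is involved.
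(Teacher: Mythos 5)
Your proof is correct and follows essentially the same route as the paper's: reduce via Proposition~\ref{prop:pd} to counting nonzeros in a column of the lower triangle of $\bB_{t|t-1}'\bB_{t|t-1}$, then use the block-diagonal structure from Proposition~\ref{prop:blockBB} to see that each coarser-resolution block row contributes at most $r_k$ entries (the knots of the unique ancestor region), summing to at most $N=\sum_m r_m$. Your handling of the ordering and of the transposed (lower-triangular) reading of Proposition~\ref{prop:blockBB} is, if anything, spelled out more explicitly than in the paper.
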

\noindent
Figure \ref{fig:sparsity-L} illustrates the structure of $\bL$. 

The results above show that all matrices computed in the MRF Algorithm \ref{alg:mrf} are very sparse, with only $\order(n N)$ nonzero entries. The update step preserves the sparsity, so that $\bB_{t|t} \in \Sp(\bB_{t|t-1})$. Due to the Markov structure of order 1 implied by our state-space model, there is no need to store matrices from previous time points, and so the memory complexity of the entire MRF algorithm is $\order(n N)$.

\subsubsection{Computation time \label{sec:comp-complexity}}

For determining the time complexity of the MRF, we assume that the number of knots within each subregion is constant across resolutions (i.e. $r_m=r$ for $m=0, \dots, M$) and so $N=(M+1)r$. While the efficacy of our method does not depend on this assumption, it greatly simplifies the complexity calculations and helps to develop an intuition regarding its computational benefits. 

\begin{proposition}\label{prop:mrd-comp}
Given a covariance matrix $\bfSigma$, $\bB = \mrd(\bfSigma)$ can be computed in $\mathcal{O}(nN^2)$ time using Algorithm \ref{alg:mrd}.
\end{proposition}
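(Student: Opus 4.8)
The plan is to walk through Algorithm \ref{alg:mrd} and bound the cost of each arithmetic operation, then sum over all regions at all resolutions. First I would note the structure of the nested loops: the outer loop runs over $m = 0, 1, \ldots, M$ and over all index tuples $(\jm) \in \{1, \ldots, J\}^m$, of which there are $J^m$; the inner loop runs over $\ell = 0, \ldots, m$. For each triple $(m, (\jm), \ell)$, equations \eqref{eq:W} and \eqref{eq:V} require forming a sum of $\ell$ correction terms, each involving products of matrices whose dimensions are controlled by the knot counts $r_0, \ldots, r_m$ and the grid-block size $|\gridind_\jm|$. Since we assume $r_m = r$ for all $m$, and since $\sum_{\jm} |\gridind_\jm| = n_\grid = \order(n)$ at every fixed resolution $m$ (the finest-region grid blocks partition the grid; coarser blocks are unions of these), the key quantities are: $N = (M+1)r$, the number $J^m$ of blocks at resolution $m$, and the bound $M+1 = N/r = \order(N)$ on the number of resolutions and on the inner-loop length.

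The key steps, in order: (i) Bound the cost of a single evaluation of \eqref{eq:V}: each of the $\ell \le M$ summands is a product of an $r \times r$ matrix, the inverse of an $r\times r$ matrix (which can be precomputed once per $(m,(\jm))$ as $(\bV^k_\jk)^{-1}$ — note $\bV^k_\jk$ depends only on the ancestor tuple $\jk$, so it is computed $O(J^k)$ times total, cost $O(r^3)$ each), and an $r \times r$ matrix, giving $O(r^3)$ per summand and $O(Mr^3)$ for the whole sum; the final Cholesky-type factorization $(\bV^m_\jm)^{-1/2}$ and the formation of $\bB_\jm = \bW^m_\jm(\bV^m_\jm)^{-1/2}$ cost $O(r^3)$ and $O(|\gridind_\jm| r^2)$ respectively. (ii) Bound the cost of a single evaluation of \eqref{eq:W}: each summand is $(|\gridind_\jm| \times r) \cdot (r \times r) \cdot (r \times r)$, costing $O(|\gridind_\jm| r^2)$, so the full sum costs $O(M |\gridind_\jm| r^2)$. (iii) Sum over $\ell = 0, \ldots, m$: this multiplies the per-$\ell$ costs by at most $M+1$, giving $O(M^2 |\gridind_\jm| r^2 + M^2 r^3)$ per region $(m,(\jm))$ — but more carefully, summing the $O(M r^3)$ and $O(M|\gridind_\jm| r^2)$ bounds over $\ell\le m\le M$ yields $O(M^2 r^3 + M^2 |\gridind_\jm| r^2)$, which I'll want to express via $N$. (iv) Sum over all regions: $\sum_{m=0}^M \sum_{(\jm)} \big( M^2 r^3 + M^2 |\gridind_\jm| r^2 \big)$. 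The second term telescopes nicely: $\sum_{(\jm)} |\gridind_\jm| = n$ for each $m$, so $\sum_m M^2 r^2 \sum_{(\jm)} |\gridind_\jm| = (M+1) M^2 r^2 n = \order(n N^2 \cdot M) $ — so I need to be careful here, because $M^3 r^2 n$ is not obviously $\order(nN^2) = \order(n M^2 r^2)$.

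The main obstacle, and the point requiring the most care, is exactly this last accounting: a naive bound gives an extra factor of $M$ (or $N/r$) beyond the claimed $\order(nN^2)$. The resolution is that the inner sum over $\ell$ in \eqref{eq:W}--\eqref{eq:V} should not be recomputed from scratch for each $\ell$; instead, the partial sums $\sum_{k=0}^{\ell-1} \bW^k_\jm (\bV^k_\jk)^{-1} (\bV^k_\jl)'$ can be maintained incrementally, so that advancing $\ell \to \ell+1$ costs only $O(|\gridind_\jm| r^2)$ (for \eqref{eq:W}) and $O(r^3)$ (for \eqref{eq:V}), not $O(\ell |\gridind_\jm| r^2)$. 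With this bookkeeping, the total cost at region $(m,(\jm))$ is $O(m |\gridind_\jm| r^2 + m r^3)$, and summing gives $\sum_{m=0}^M \sum_{(\jm)} O(m|\gridind_\jm|r^2 + m r^3) = O\big( M r^2 \sum_m \sum_{(\jm)} |\gridind_\jm| \big) + O\big( r^3 \sum_m m J^m\big)$. The first term is $O(M^2 r^2 n) = O(n N^2)$ as desired. For the second term I would invoke the standard MRA balance condition that the number of knots at the finest resolution is comparable to the grid size, i.e. $J^M r = \order(n)$, hence $r^3 \sum_{m} m J^m = \order(M r^3 J^M) = \order(M r^2 n) = \order(N^2 n / M) = \order(nN^2)$. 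Assembling these bounds completes the proof; I would also remark that the $\order(nN^2)$ figure treats $M$ and $r$ (hence $N$) as the tuning parameters and $n$ as the problem size, consistent with the linear-in-$n$ claim made in Section \ref{sec:comp-cost}.
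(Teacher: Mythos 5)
You correctly isolate the crux: a literal implementation that evaluates \eqref{eq:W}--\eqref{eq:V} separately for every triple $(m,(\jm),\ell)$ costs $\sum_m\sum_{(\jm)}\order(m^2|\gridind_\jm|r^2)=\order(M^3nr^2)$, an extra factor of $M$ beyond the claim, so some reuse of computation is needed. However, the fix you propose --- maintaining the partial sums $\sum_{k=0}^{\ell-1}\bW^k_\jm(\bV^k_\jk)^{-1}(\bV^k_\jl)'$ incrementally as $\ell\to\ell+1$ --- does not work. The summands are not nested in $\ell$: when $\ell$ advances, the right-hand factor changes from $(\bV^k_\jl)'$ to $(\bV^k_\jlp)'$ for \emph{every} $k<\ell$, since these matrices are indexed by the knots of different regions ($\knotind_\jl$ versus $\knotind_\jlp$). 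So the partial sum computed for resolution $\ell$ cannot be reused at resolution $\ell+1$; even if you cache the $\ell$-independent products $\bW^k_\jm(\bV^k_\jk)^{-1}$, each new $\ell$ still requires $\ell$ fresh multiplications of cost $\order(|\gridind_\jm|r^2)$ each, and the $\order(M^3nr^2)$ bound is not improved.

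The paper removes the extra factor of $M$ by a different reuse argument, based on nestedness of rows rather than on incremental sums over $\ell$. Because the correction terms in \eqref{eq:W} involve only right factors $(\bV^k_\jk)^{-1}(\bV^k_\jl)'$ that depend on the first $\ell$ (and $k$) indices alone, and because $\gridind_\jm\subset\gridind_\jl$, one shows by induction that $\bW^\ell_\jm$ is simply the row-selection of $\bW^\ell_\jl$ to the rows $\gridind_\jm$ (and $\bV^\ell_\jm$ is in turn a row-selection of $\bW^\ell_\jm$, since $\knotind_\jm\subset\gridind_\jm$). Hence each level-$\ell$ quantity needs to be \emph{computed} only once, at its own level-$\ell$ region $(\jl)$, at cost $\order(\ell\,|\gridind_\jl|r^2)$; all deeper-level copies are obtained for free by selecting rows. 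Summing $|\gridind_\jl|$ over the regions at level $\ell$ gives $n$, so each level costs $\order(Mnr^2)$ and the total is $\order(M^2nr^2)=\order(nN^2)$. (Your handling of the $r^3$ terms via $J^Mr=\order(n)$ is unnecessary baggage: since the knot sets partition the grid, $\sum_m J^m r\le n$ holds automatically, and in the paper's accounting $r^3\le|\gridind_\jl|r^2$ absorbs these terms directly.) To repair your proof, replace the incremental-partial-sum step with this row-selection argument; the rest of your bookkeeping then goes through essentially as in the paper.
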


We further assume that the evolution is local, in the sense that the nonzero elements in any given row of $\levol_t$ only correspond to grid points in a small number of regions at the finest resolution of the domain partitioning:
\begin{assumption}
\label{ass:local}
Assume that the evolution matrix $\levol_t$ is sparse with at most $\mathcal{O}(r)$ nonzero elements per row, which must only correspond to a small, constant number of subregions,
$$
|\{ \gridind_\jM : \exists j \in \gridind_\jM \text{ such that } \levol_t[i,j] \neq 0 \}|  \leq c, \quad i=1,\ldots,n.
$$
\end{assumption}
For example, for local evolution in two-dimensional space, we have $c \leq 4$.

\begin{proposition}\label{prop:mrf-complex}
Under Assumptions \ref{ass:hr} and \ref{ass:local}, the MRF in Algorithm \ref{alg:mrf} requires $\mathcal{O}(nN^2)$ operations at each time step $t$.
\end{proposition}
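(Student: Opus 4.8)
The plan is to bound the cost of each operation in Algorithm~\ref{alg:mrf} separately, using the sparsity results of Propositions~\ref{prop:sparsity-B}--\ref{prop:sparsity}, and then to observe that every contribution is $\order(nN^2)$. Two facts are used repeatedly: a row of $\bB_{t-1|t-1}$ has $N$ nonzeros (Proposition~\ref{prop:sparsity-B}), and rows lying in the same finest region $\gridind_\jM$ share a common $N$-column support (their chain of ancestor blocks); also $r\le N$ and $M\le N$.

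For the forecast step, I would first note that $\bfmu_{t|t-1}=\levol_t\bfmu_{t-1|t-1}$ costs $\order(nr)$ by Assumption~\ref{ass:local}, and that row $i$ of $\bB_{t|t-1}^F=\levol_t\bB_{t-1|t-1}$ is a linear combination of the $\order(r)$ rows of $\bB_{t-1|t-1}$ picked out by $\levol_t[i,:]$; by Assumption~\ref{ass:local} these come from at most $c$ finest regions, so the row has $\order(N)$ nonzeros and is formed in $\order(rN)\le\order(N^2)$ time, for $\order(nN^2)$ in total. The crucial observation is that Algorithm~\ref{alg:mrd} only reads entries $\bfSigma_{t|t-1}^F[\gridind_\jm,\knotind_\jl]$ with $(\jl)$ a prefix of $(\jm)$; hence for a fixed row index $i$ the columns ever accessed form the union of the knot sets $\knotind_{(j_1,\dots,j_\ell)}$, $\ell=0,\dots,M$, along the ancestor chain of $i$'s finest region, of total size $\sum_\ell r_\ell=N$. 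Each such entry is an inner product of two $\order(N)$-sparse rows of $\bB_{t|t-1}^F$ plus an entry of $\bQ_t$, so it costs $\order(N)$; materializing the required part of $\bfSigma_{t|t-1}^F$ is thus $\order(nN^2)$, after which the MRD itself runs in $\order(nN^2)$ by Proposition~\ref{prop:mrd-comp}.

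For the update step, I would use that Assumption~\ref{ass:hr} makes $\bM_t\colonequals\bH_t'\bR_t^{-1}\bH_t$ block-diagonal over the finest regions with $\order(r)$-sized blocks, so it is formed in $\order(nr^2)$ time and $\bfLambda_t-\bI=\sum_\jM\bB_{t|t-1}[\gridind_\jM,:]'\,\bM_t[\gridind_\jM,\gridind_\jM]\,\bB_{t|t-1}[\gridind_\jM,:]$ is assembled in $\order\!\big((n/r)(N^2r+Nr^2)\big)=\order(nN^2)$ time, with $\bfLambda_t$ having $\order(N)$ nonzeros per row by Proposition~\ref{prop:blockBB}. For the Cholesky factor, Proposition~\ref{prop:pd} guarantees no fill-in beyond the pattern of $\bfLambda_t^L$ and Proposition~\ref{prop:sparsity} caps each column of $\bL_t$ at $\order(N)$ nonzeros, so eliminating column $j$ costs $\order\!\big((\text{nnz of column }j)^2\big)=\order(N^2)$ and the factorization is $\order(nN^2)$; the same bounds apply to forming $\bL_t^{-1}$ and to the product $\bB_{t|t}=\bB_{t|t-1}(\bL_t^{-1})'$, which stays in $\Sp(\bB_{t|t-1})$ by Proposition~\ref{prop:pd}. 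Finally $\bfmu_{t|t}=\bfmu_{t|t-1}+\bB_{t|t}\bB_{t|t}'\bH_t'\bR_t^{-1}(\by_t-\bH_t\bfmu_{t|t-1})$ is a chain of sparse matrix--vector products costing $\order(nN)$, and summing all contributions gives $\order(nN^2)$ per time step.

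The main obstacle is the bookkeeping behind two claims: (i) that the MRD in the forecast step touches only the $N$ ancestor-chain knot columns of each row of $\bfSigma_{t|t-1}^F$, so that generating those entries from the sparse factor does not incur an extra factor of $M$; and (ii) that the sparse Cholesky, triangular inversion, and sparse back-multiplication in the update step cost only $\order(nN^2)$, which hinges entirely on the no-fill-in and $\order(N)$-column-sparsity guarantees of Propositions~\ref{prop:pd} and~\ref{prop:sparsity}. Neither step needs a new idea, but both require assembling the earlier structural results carefully to avoid a looser exponent.
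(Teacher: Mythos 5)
Your proposal is correct and follows essentially the same route as the paper's proof: bound each piece of the forecast and update steps using the sparsity results (Propositions \ref{prop:sparsity-B}--\ref{prop:sparsity} and Assumptions \ref{ass:hr}--\ref{ass:local}), noting in particular that the MRD only needs $\order(nN)$ entries of $\bfSigma_{t|t-1}^F$, each computable in $\order(N)$ time from the sparse factor, and that the Cholesky factorization, triangular inversion, and back-multiplication each cost $\order(nN^2)$ thanks to the no-fill-in and $\order(N)$-per-column guarantees. Your ancestor-chain accounting of which covariance entries the MRD touches is just a more explicit version of the paper's statement that only $\order(nN)$ entries are required, so there is no substantive difference.
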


In practice, $N = (M+1)r$ is chosen by the user depending on the required approximation accuracy and the available computational resources. For fixed $N$, the time and memory complexity of Algorithm \ref{alg:mrf} are linear in $n$. If $M$ increases as $M=\mathcal{O}(\log n)$ for increasing $n$ \citep[e.g.,][]{Katzfuss2016mra} and $r$ is held constant, the resulting complexity is quasilinear.

\subsection{Distributed computation \label{sec:distr}}

For truly massive dimensions (i.e., $n_\grid = \order(10^7)$ or more), memory limitations will typically require distributing the analysis across several computational nodes. The MRF is well suited for such a distributed environment, as information pertaining to different subregions of the domain can be stored and processed in separate nodes, with limited communication overhead required between nodes.
We plan to leverage these properties of the MRF by designing an implementation of Algorithm \ref{alg:mrf} that can be deployed in a high-performance-computation environment. We include further details in Section S2.

\subsection{Forecasting and smoothing \label{sec:extensions}}

Forecasting is straightforward using the MRF. Given the filtering distribution $\bx_T|\by_{1:T}$ as obtained by Algorithm \ref{alg:mrf}, we can compute the $k$-step-ahead forecast  $\bx_{T+k}|\by_{1:T}$ by simply carrying out the forecast step in Algorithm \ref{alg:mrf} $k$ times, while skipping the update step. More precisely, we carry out Algorithm \ref{alg:mrf} for $t=T+1,\ldots,T+k$, but at each time point $t$, we replace Step 2 by simply setting $\bfmu_{t|t}  = \bfmu_{t|t-1}$ and $\bB_{t|t} = \bB_{t|t-1}$. The accuracy of such forecasts will depend heavily on the quality of the evolution matrices $\bA_t$, and so a physics-informed evolution can result in much better forecasts than simple models such as random walks.

In some applications, one might also be interested in obtaining retrospective smoothing distributions $\bx_t|\by_{1:T}$ for $t<T$. These can be computed exactly by carrying out the Kalman filter up to time $T$, and then carrying out recursive backward smoothing \citep[e.g.,][]{rauch1965maximum}, but this is not feasible for large grids. It is challenging to extend the MRF by deterministic backward-smoothing operations that preserve sparsity, but it may be possible to devise a scalable MRF-based forward-filter-backward-sampler algorithm. We intend to investigate this modification in future work.

\section{Connections to existing methods\label{sec:connections}}

In this section, we discuss in some detail the connections between our MRF and hierarchical matrix decompositions and basis-function approximations. Further, in Section S3, we discuss connections to multi-resolution autoregressive models, which demonstrate that the MRF can also be interpreted as a nested Kalman filter that proceeds over resolutions within each outer filtering step over time.

\subsection{MRD as hierarchical low-rank decomposition \label{sec:HODLR}}

Hierarchical off-diagonal low-rank (HODLR) matrices are a popular tool in numerical analysis, and they have recently also been applied to Gaussian processes \citep[e.g.][]{Ambikasaran2013, Ambikasaran2016}. In HODLR matrices, the off-diagonal blocks are recursively specified or approximated as low-rank matrices. In this section, we show the connection between the HODLR format and the MRD when $J=2$. 

\begin{definition} \emph{\citep{Ambikasaran2016}}
A matrix $\bK \in \mathbb{R}^{N\times N}$ is termed a 1--level hierarchical off-diagonal low-rank (HODLR) matrix of rank $p$, if it can be written as
\[
\bK = \left[ \begin{array}{cc} 
	\bK_1 ^{(1)} & \bU_1^{(1)} (\bV_{1} ^{(1)})'  \\
	\bU_2 ^{(1)} (\bV_{2}^{(1)})' & \bK_2 ^{(1)} 
	\end{array} \right],
\]
where $\bK_i^{(1)} \in \reals^{N/2 \times N/2}$, and $\bU_i^{(1)}, \bV_i^{(1)}\in \reals^{N/2\times p}$. We call $\bK$ an $m$--level HODLR matrix of rank $p$ if both diagonal blocks are $(m-1)$--level HODLR matrices of rank $p$.
\end{definition}

If we use $H_m^p$ to denote the set of all $m$-level HODLR matrices of rank $p$, then it follows that $H_m^p \subset H_{m-1}^p$.
 The optimal low-rank representation is obtained by specifying the matrices $\bU_i^{(j)}$ and $\bV_i^{(j)}$ as the first $p$ singular vectors of the corresponding off-diagonal submatrix \citep[Item 5.6.13]{hogben2006handbook}, but this is prohibitively expensive. \cite{Ambikasaran2016} discuss multiple ways of approximating this low-rank representation.

We now show that the outer product of an MRD factor is a HODLR matrix, specifically one in which the low-rank approximations are obtained as skeleton factorizations.
\begin{proposition} \label{prop:hodlr}
Let $\bB=\mrd(\bfSigma)$, where the decomposition is based on a partitioning scheme with $J=2$ and $r_m=r$ for $m=0, \dots, M$. Then, $\bB\bB' \in H_M^r$.
\end{proposition}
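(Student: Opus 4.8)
The plan is to show that $\bB\bB'$ satisfies the recursive HODLR definition by induction on the resolution depth, exploiting the block structure that the MRD inherits from the recursive domain partitioning with $J=2$. First I would set up notation: with $J=2$, the top-level partition splits $\domain$ into $\domain_1$ and $\domain_2$, with index sets $\gridind_1$ and $\gridind_2$, and by the lexicographic ordering assumption the rows and columns of $\bfSigma$ (and of $\bB$) split cleanly into a $\gridind_1$-block and a $\gridind_2$-block, each of size $N/2$ in the notation of the proposition. Writing $\bB\bB'$ in this $2\times2$ block form, I need to identify the diagonal blocks $(\bB\bB')[\gridind_1,\gridind_1]$ and $(\bB\bB')[\gridind_2,\gridind_2]$ and the off-diagonal block $(\bB\bB')[\gridind_1,\gridind_2]$.

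The key structural observation is how $\bB = (\bB^M,\ldots,\bB^0)$ decomposes under this partition. For resolutions $m\geq 1$, every block $\bB_\jm$ has its rows indexed by $\gridind_\jm \subset \gridind_{j_1}$, so the column groups of $\bB^m$ ($m\geq1$) split into those supported on $\gridind_1$-rows and those supported on $\gridind_2$-rows — these contribute only to the diagonal blocks of $\bB\bB'$. The single exception is resolution $m=0$: the block $\bB_\emptyset = \bB^0$ is an $n_\grid \times r_0$ matrix with rows spanning both $\gridind_1$ and $\gridind_2$, i.e. $\bB^0 = \begin{bmatrix}\bB^0_{(1)}\\ \bB^0_{(2)}\end{bmatrix}$ with $\bB^0_{(i)}$ of size $N/2 \times r$. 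Therefore the off-diagonal block is exactly $(\bB\bB')[\gridind_1,\gridind_2] = \bB^0_{(1)}(\bB^0_{(2)})'$, which has rank at most $r$ — giving the required low-rank off-diagonal structure with $\bU_1^{(1)} = \bB^0_{(1)}$ and $\bV_1^{(1)} = \bB^0_{(2)}$ (and symmetrically for the $(2,1)$ block).

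For the inductive step, I claim that each diagonal block $(\bB\bB')[\gridind_i,\gridind_i]$ equals $\bB_{(i)}\bB_{(i)}'$ where $\bB_{(i)}$ is, up to the trivial rescaling/reindexing, exactly the MRD factor one obtains by running Algorithm \ref{alg:mrd} on the subproblem rooted at $\domain_i$ with resolutions $1,\ldots,M$ relabeled as $0,\ldots,M-1$. This requires checking that the recursions \eqref{eq:W}–\eqref{eq:V} restricted to indices inside $\domain_i$ — including the correction sums over lower resolutions $k$ — are unaffected by anything outside $\domain_i$; this is true because $\knotind_\jk \subseteq \gridind_\jk \subseteq \gridind_i$ for any multi-index $\jk$ with $j_1 = i$ and $k\geq1$, and the $k=0$ term is the one peeled off into the off-diagonal block. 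Granting this, $(\bB\bB')[\gridind_i,\gridind_i]$ is the outer product of an $(M-1)$-level-style MRD factor based on a $J=2$ partitioning, hence by the induction hypothesis lies in $H_{M-1}^r$; combined with the rank-$\leq r$ off-diagonal blocks, this places $\bB\bB'$ in $H_M^r$. The base case $M=0$ is immediate: $\bB = \bB^0$ is $n_\grid \times r_0$, so $\bB\bB'$ has rank $\leq r$ and trivially lies in $H_0^r$ (no off-diagonal constraint at level $0$).

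The main obstacle I anticipate is making the "restriction of the MRD to a subregion" claim fully rigorous: one must verify not only that the index sets behave correctly but that the ordering convention on $\bx_t$ stated in Section \ref{sec:knots} descends to a valid ordering for the subproblem, and that the subtracted correction terms $\sum_{k=0}^{\ell-1}\bW^k(\bV^k)^{-1}(\bV^k)'$ genuinely decouple across the two halves at all resolutions $k\geq1$ while the $k=0$ contribution is precisely the piece that generates the low-rank off-diagonal coupling. A clean way to organize this is to first prove a lemma stating that for $J=2$, $\bB\bB'$ admits the block decomposition $\bB\bB' = \begin{bmatrix}\bC_1 & \bB^0_{(1)}(\bB^0_{(2)})' \\ \bB^0_{(2)}(\bB^0_{(1)})' & \bC_2\end{bmatrix}$ with $\bC_i$ itself the outer product of a $J=2$ MRD factor at depth $M-1$, and then invoke it recursively; the bookkeeping in that lemma is the real content, while the HODLR conclusion then follows formally by induction.
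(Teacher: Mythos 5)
Your top-level step is correct and matches the heart of the matter: since every $\bB^m$ with $m\geq 1$ is block-diagonal with blocks supported entirely inside $\gridind_1$ or $\gridind_2$, the top off-diagonal block of $\bB\bB'$ is exactly $\bB^0_{(1)}(\bB^0_{(2)})'$ and hence has rank at most $r$. The genuine gap is in your inductive step. The diagonal block is $(\bB\bB')[\gridind_i,\gridind_i] = \bB^0_{(i)}(\bB^0_{(i)})' + \sum_{m\geq 1}\bB^m[\gridind_i,:]\,(\bB^m[\gridind_i,:])'$, and only the second sum has the shape of a depth-$(M-1)$ multi-resolution factor for the subregion rooted at $\domain_i$. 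The first term --- the restriction of the resolution-$0$ columns --- is a dense, rank-$\leq r$ matrix spanning all of $\gridind_i$ and is not part of any MRD factor built from resolutions $1,\ldots,M$. So your key identity ``$(\bB\bB')[\gridind_i,\gridind_i]=\bB_{(i)}\bB_{(i)}'$ with $\bB_{(i)}$ the sub-MRD factor'' is false as stated, and the induction hypothesis cannot be applied to the diagonal blocks. (A secondary, fixable issue: even the resolutions-$1,\ldots,M$ part is not the output of Algorithm \ref{alg:mrd} applied to $\bfSigma[\gridind_i,\gridind_i]$, because the $k=0$ correction in \eqref{eq:W}--\eqref{eq:V} conditions on the resolution-$0$ knots; one would have to phrase the induction structurally, for an arbitrary factor with the MRD sparsity pattern, rather than for the MRD of a specific covariance.) The natural patch --- absorbing $\bB^0_{(i)}$ into the sub-factor as extra coarse columns --- is not mere bookkeeping: after descending $\ell$ levels the sub-factor has accumulated on the order of $\ell r$ dense coarse columns, so the off-diagonal rank bound your induction delivers at depth $\ell$ is tied to that accumulated count rather than automatically remaining $r$; reconciling this with membership in $H_M^r$ is exactly the content your sketch defers to a lemma, and it is where the argument currently breaks.

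The paper organizes the induction differently and thereby never needs the ``diagonal block equals a sub-MRD outer product'' identification: it writes $\bB\bB' = \sum_{m=0}^M \bB^m(\bB^m)'$, sets $\bB^{0:m}=(\bB^m,\ldots,\bB^0)$, verifies the case $m=1$ by direct computation, and inducts over resolutions, using that each added term $\bB^m(\bB^m)'$ is block-diagonal with blocks nested inside the blocks of the coarser-resolution terms, so that adding it perturbs only diagonal blocks of the hierarchy. If you want to salvage a top-down recursion over the spatial partition, you must strengthen the inductive hypothesis to cover ``sub-factor outer product plus the restricted coarse-scale columns inherited from all ancestor resolutions'' and then control the rank of the off-diagonal blocks produced by those inherited columns; without that, the recursion as you have set it up does not go through.
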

The proof is given in Appendix \ref{app:proofs}. It can easily be extended to $r_m$ varying by resolution. Thus, the MRF approximation of the prior covariance matrix, $\bfSigma_{t|t-1} = \bB_{t|t-1}\bB_{t|t-1}'$, is a HODLR matrix \citep{Ambikasaran2016}. In contrast to previous approaches using HODLR matrices for spatio-temporal models \citep[e.g.][]{LiAmbikasaran2014,Saibaba2015}, the block-sparse MRD matrices allow the MRF to handle non-diagonal evolution matrices $\levol_t$ and full-rank model-error matrices $\bQ_t$.

\subsection{MRD as basis-function approximation} \label{sec:mra}

The MRD is related to the multi-resolution approximation \citep[MRA;][]{Katzfuss2016mra} of a Gaussian process as a weighted sum of increasingly compactly supported basis functions at $M$ resolutions. While the MRD adapts the MRA to an approximate decomposition of a covariance matrix evaluated at a spatial grid, $\bfSigma=\bB\bB'$, we can similarly interpret each column of $\bB$ as a basis vector over the grid. In other words, the spatial field $\bx \sim \normal(\bfzero,\bfSigma)$ is approximated as $\bx \approx \bB \bfeta$, where $\bfeta \sim \normal(\bfzero,\bI)$ is the vector of independent standard normal weights. By interpolating the values of the basis vectors between grid points, we can visualize the columns of $\bB$ as basis functions, which is illustrated in Figure \ref{fig:priorbf}.

The basis functions obtained in this way exhibit interesting properties. Their support is increasingly compact as the resolution increases, and basis functions at low resolution capture the large-scale structure. There are strong connections between the MRD and stochastic wavelets, with the major difference that the shape of the basis functions in the MRD adapts to the covariance structure in $\bfSigma$. This adaptation is especially useful in the spatio-temporal context here, which requires approximation of the forecast covariance matrix $\bfSigma_{t|t-1}^F$ that depends on the data at previous time points and is hence highly nonstationary. The compact support stems from the assumption of a block-sparse structure at each resolution in the MRD, which is equivalent to assuming that the remainder at each resolution is conditionally independent between subregions at that resolution, given the terms at lower resolutions. In general, this assumption is not satisfied and thus produces an approximation error, although the MRD is exact in some special cases (see Section \ref{sec:accuracy}).

\begin{figure}
	\centering
	\includegraphics[width=1.0\linewidth]{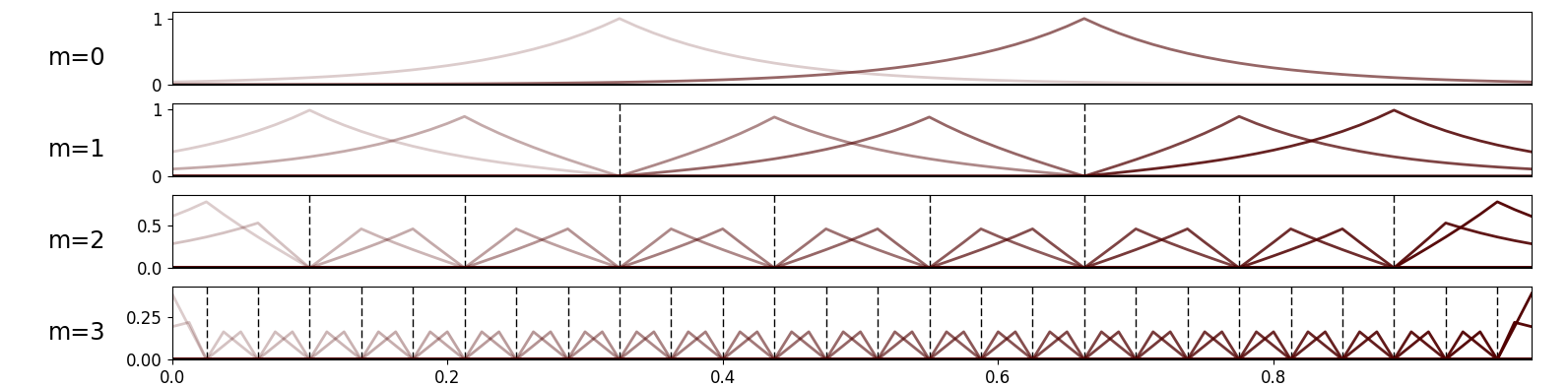}
	\caption{Basis functions obtained by interpolating the entries in each column of $\bB = \mrd(\bfSigma)$ in Figure \ref{fig:sparsity-B} using the grid from Figure \ref{fig:knotexample}, with $\bfSigma$ based on an exponential covariance with range $0.3$. Each basis function's support is restricted to one of the subregions (dashed lines) at each resolution.}
	\label{fig:priorbf}
\end{figure}

\section{Parameter inference \label{sec:parinference}}

If there are random, time-varying parameters $\bftheta_t$ in any of the matrices in \eqref{eq:obs}--\eqref{eq:evol}, that are distinct from the Gaussian state $\bx_t$, we can make inference on these parameters using an approximate Rao-Blackwellized particle filter \citep{Doucet2000}, in which we use the MRF algorithm to approximately integrate out the high-dimensional state $\bx_t$ at each time point. An alternative approach, based on including the unknown parameters in the state vector, otherwise known as data augmentation, tends to work poorly for certain parameters and thus is less general \citep[e.g.,][]{DelSole2010,Katzfuss2017c}.

To derive our filter, note that we have
\[
p(\by_{1:T} | \bftheta_{1:T}) = \prod_{t=1}^T p(\by_t|\by_{1:t-1},\bftheta_{1:t}) \equalscolon \prod_{t=1}^T \mathcal{L}_t(\bftheta_{1:t}),
\] 
where, after integrating out $\bx_t$, we have $\by_t|\by_{1:t-1},\bftheta_{1:t} \sim \normal_{n_t}(\bH_t\bfmu_{t|t-1},\bH_t\bfSigma_{t|t-1}\bH_t'+\bR_t)$ with $\bfSigma_{t|t-1} = \bB_{t|t-1}\bB_{t|t-1}'$. By applying a matrix determinant lemma \citep[e.g.,][Thm.~18.1.1]{Harville1997} to the determinant and the Sherman-Morrison-Woodbury formula to the quadratic form in the multivariate normal density, it is straightforward to show that the integrated filtering likelihood at time $t$ can be written, up to a constant, as
\begin{equation}
\label{eq:lik}
 - 2 \log \mathcal{L}_t(\bftheta_{1:t}) = 2 \log |\bL_t| + \log|\bR_t| + (\by_t-\bH_t\bfmu_{t|t-1})'\bR_t^{-1}(\by_t-\bH_t\bfmu_{t|t-1}) - \tilde\by_t'\tilde\by_t,
\end{equation}
where $\tilde\by_t \colonequals \bB_{t|t}'\bH_t'\bR_t^{-1}(\by_t-\bH_t\bfmu_{t|t-1})$, and we have omitted dependence on the parameters $\bftheta_{1:t}$ for the terms on the right-hand side.

Assuming that the priors for the $\bftheta_t$ are given by $p_0(\bftheta_0)$ for $t=0$, and recursively by $p_t(\bftheta_{t} |\bftheta_{t-1})$ for $t=1,2,\ldots$, the particle MRF proceeds as follows:
\begin{framed}
\begin{algorithm}
\label{alg:mrfpf} \textbf{Particle MRF}\\
At time $t=0$, for $i=1,\ldots,N_p$, draw $\bftheta_0^{(i)} \sim p_0(\bftheta_0)$ with equal weights $w_0^{(i)}=1/N_p$, and compute $\bfmu_{0|0}^{(i)} = \bfmu_{0|0}(\bftheta^{(i)})$ and $\bB_{0|0}^{(i)} = \mrd(\bfSigma_{0|0}(\bftheta_0^{(i)}))$. Then, for each $t=1,2,\ldots$, do:
\begin{itemize}
	\item For $i=1,\ldots,N_p$:
		\begin{itemize}
			\item Sample $\bftheta_{t}^{(i)}$ from a proposal distribution $q_t(\bftheta_t|\bftheta_{t-1}^{(i)})$. 
			\item Forecast step: Compute $\bfmu_{t|t-1}^{(i)} = \levol_t(\bftheta_t^{(i)}) \bfmu_{t-1|t-1}^{(i)}$, $\bB_{t|t-1}^F{}^{(i)} = \levol_t(\bftheta_t^{(i)}) \bB_{t-1|t-1}^{(i)}$, and $\bB_{t|t-1}^{(i)} = \mrd(\bfSigma_{t|t-1}^{(i)})$, where $\bB_{t|t-1}^F{}^{(i)} (\bB^F _{t|t-1}{}^{(i)})' + \bQ_t(\bftheta_t^{(i)}))$.
			\item Update step: Compute $\bfLambda_t^{(i)} = \bI_{n_\grid} + \bB_{t|t-1}^{(i)}{}' \bH_t(\bftheta_t^{(i)})' \bR_t(\bftheta_t^{(i)})^{-1} \bH_t(\bftheta_t^{(i)}) \bB_{t|t-1}^{(i)}$, $\bL_t^{(i)}$ as the lower Cholesky triangle of $\bfLambda_t^{(i)}$, $\bB_{t|t}^{(i)} = \bB_{t|t-1}^{(i)} ((\bL_t^{(i)})^{-1})'$, and $\bfmu_{t|t}^{(i)}  = \bfmu_{t|t-1}^{(i)} + \bB_{t|t}^{(i)}\bB_{t|t}^{(i)}{}'\bH_t(\bftheta_t^{(i)})'\bR_t(\bftheta_t^{(i)})^{-1}(\by_t-\bH_t(\bftheta_t^{(i)})\bfmu_{t|t-1}^{(i)})$.
			\item Using the quantities just computed for $\bftheta_t^{(i)}$, calculate $\mathcal{L}_t(\bftheta_{1:t}^{(i)})$ as in \eqref{eq:lik}, and update the particle weight $w^{(i)}_t \propto w^{(i)}_{t-1} \mathcal{L}_t(\bftheta_{1:t}^{(i)}) p_t(\bftheta_{t}^{(i)} |\bftheta_{t-1}^{(i)})/ q_t(\bftheta_{t}^{(i)} |\bftheta_{t-1}^{(i)})$.
		 \end{itemize}
	 \item The filtering distribution is $p(\bftheta_t,\bx_t| \by_{1:t}) = \sum_{i=1}^{N_p} w_t^{(i)} \delta_{\bftheta_t^{(i)}}(\bftheta_t) \, \normal_{n_\grid}(\bx_t|\bfmu_{t|t}^{(i)},\bB_{t|t}^{(i)}\bB_{t|t}^{(i)}{}')$.
 \item If desired, resample the triplets $\{(\bftheta_{t}^{(i)},\bfmu_{t|t}^{(i)},\bB_{t|t}^{(i)}): i=1,\ldots,N_p\}$ with weights $w_t^{(1)},\ldots,w_t^{(N_p)}$, respectively, to obtain an equally weighted sample \citep[see, e.g.,][for a comparison of resampling schemes]{Douc2005}.  
\end{itemize}
\end{algorithm}
\end{framed}

Section S5 presents numerical experiments demonstrating the accuracy of Algorithm \ref{alg:mrfpf} and its advantage over a low-rank particle filter.

\section{Simulation study \label{sec:simulations}}

We used simulated data to compare the performance of the MRF with several filtering methods:
\begin{description}[itemsep=0mm,topsep=1mm]
\item[KF:] The Kalman filter (see Section \ref{sec:kf}) provides the exact filtering distributions, but has $\order(n^3)$ time complexity at each time point.
\item[MRF:] The multi-resolution filter proposed here in Section \ref{sec:mrf}, with $\mathcal{O}(nN^2)$ time complexity, where $N=\sum_{m=0}^M r_m$.
\item[EnKF:] An ensemble Kalman filter with stochastic updates \citep[e.g.,][Sect.~3.1]{katzfuss2016understanding}. We set the ensemble size to $N$ and use Kanter's function \citep{Kanter} for tapering such that the resulting matrix has $N$ nonzero entries per row. This results roughly in $\mathcal{O}(nN^2)$ time complexity \citep[e.g.,][]{Tipp:Ande:Bish:03}.
\item[LRF:] A low-rank-plus-diagonal filter that can be viewed as a spatio-temporal extension of the modified predictive process \citep{Finley2009} and as a special case of a fixed-rank filter \citep{Cressie2010a}. Moreover, it can be viewed as a special case of the MRF (hence allowing for ease of comparison) with $M=1$ resolutions and $N$ knots at resolution 0, where each grid point is in its own partition at resolution 1, resulting in a time complexity of $\order(n N^2)$.
\item[MRA:] The MRA \citep{Katzfuss2016mra} is a spatial-only method. It can essentially be viewed as a special case of the MRF, for which the filtering distribution at each time $t$ is obtained by assuming that only $\by_t$ and no data at previous time points are available. It has the same $\order(n N^2)$ complexity as the MRF.
\end{description}
While the KF provides the exact filtering distributions, it is only computationally feasible due to the deliberately small problem size chosen here. All other methods attempt to approximate the exact KF solution, but have the advantage of being scalable to much larger grid sizes. For a fair comparison, all approximate methods used the same $N$, which trades off approximation accuracy and computational complexity. Further, we acknowledge that the EnKF was designed for nonlinear evolution in operational data assimilation, and it is thus more widely applicable than the other methods.

We used two criteria to compare the performance of the approximate filters: the Kullback-Leibler (KL) divergence between the true and approximated filtering distribution of the state vector (i.e., the joint distribution for the entire spatial field), and the ratio of the root mean squared prediction error (RMSPE) achieved by each approximate method relative to the RMSPE of the KF. Detailed definitions of the criteria can be found in Section S4.1. Lower is better for both criteria, with optimal values of 0 for the KL divergence and 1 for the RMSPE ratio. In addition, Section S4.4 examines the performance of all methods in terms of the confidence-interval coverage. All quantities were averaged over 50 simulated datasets.

\subsection{One-dimensional circular domain}\label{sec:simul-1d}

In our first simulation scenario, we considered a diffusion-advection model on a one-dimensional domain consisting of a circle with a unit circumference. After discretizing both the spatial and the temporal dimensions using $n_\grid=80$ and $T=20$ regularly spaced points, respectively, we obtained a linear model as in \eqref{eq:obs}--\eqref{eq:evol}, where $\levol_t$ was a tri-diagonal matrix and $\bQ_t = \sigma_w^2 \left[ \mathcal{M}_{\nu, \lambda}(s_i, s_j) \right]_{i,j=1, \dots n_\grid}$ was based on a Mat\'ern correlation function $\mathcal{M}_{\nu,\lambda}(\cdot,\cdot)$ with smoothness $\nu$ and range $\lambda$.
At each time point, we randomly selected $n_t$ observed locations, so that $\bH_t$ is a subset of the identity, and we set $\bR_t = \sigma_v^2\bI_{n_t}$. A detailed description of the simulation, including examples of process realizations, is given in Section S4.2.

\begin{table}
	\centering
	\small
	\begin{tabular}{c|ccccc} 
	& $n_t/n_\grid$ & $\nu$ & $\lambda$ & $\sigma^2_w$ & $\sigma^2_v$ \\
	\hline \hline
	baseline & 0.3 & 0.5 & 0.1 & 0.5 & 0.05 \\
	\hline
	smooth & 0.3 & \textbf{1.5} & 0.1 & 0.5 & 0.05 \\
	dense obs.\ & \textbf{0.8} & 0.5 & 0.1 & 0.5 & 0.05 \\
	low noise & 0.3 & 0.5 & 0.1 & 0.5 & \textbf{0.01}
	\end{tabular}
	 \caption{Settings used in the \textbf{1D} simulation. Bold values indicate changes with respect to the baseline.}
	 \label{tab:baseline}
\end{table}

Because of the many possible choices of parameters, we first established baseline settings that we considered relevant for practical applications, and then examined the effects of changing them one by one. The resulting simulation scenarios are detailed in Table \ref{tab:baseline}. 
For the MRD, we set $M=3$, $J=3$, and $r_m=2$ for all $m$, and so we used $N=(3+1)2=8$ for EnKF, LRF, and MRA.
 
As shown in Figure \ref{fig:scores1d}, the MRF performed best in all four scenarios, both in terms of the KL divergence and the RMSPE ratio.

\begin{figure}
	\begin{subfigure}{1.0\textwidth}
		\includegraphics[trim={0, 3mm, 0, 0}, clip, width=1.0\textwidth]{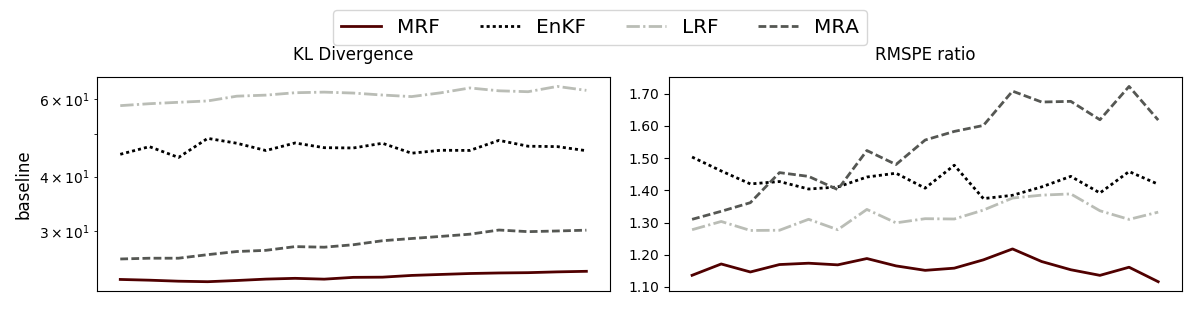}
	\end{subfigure}
	\begin{subfigure}{1.0\textwidth}
		\includegraphics[trim={0, 3mm, 0, 5mm}, clip, width=1.0\textwidth]{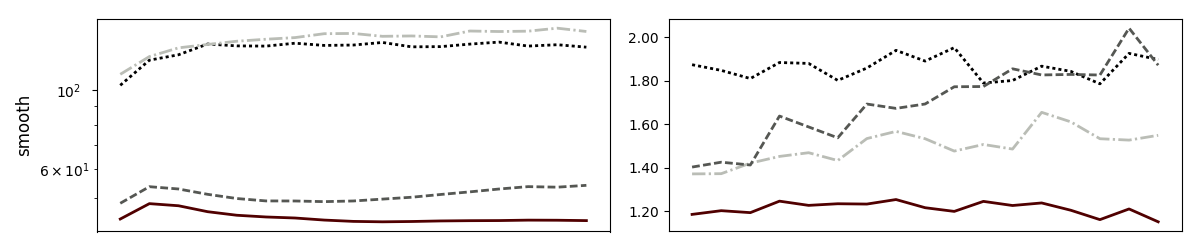}
	\end{subfigure}
	\begin{subfigure}{1.0\textwidth}
		\includegraphics[trim={0, 3mm, 0, 5mm}, clip, width=1.0\textwidth]{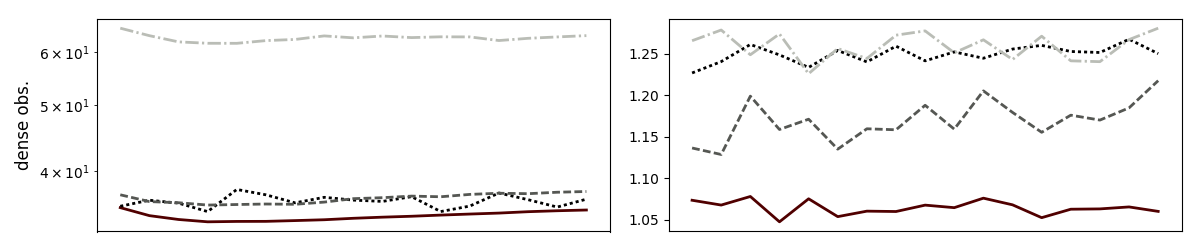}
	\end{subfigure}
	\begin{subfigure}{1.0\textwidth}
		\includegraphics[trim={0, 4mm, 0, 5mm}, clip, width=1.0\textwidth]{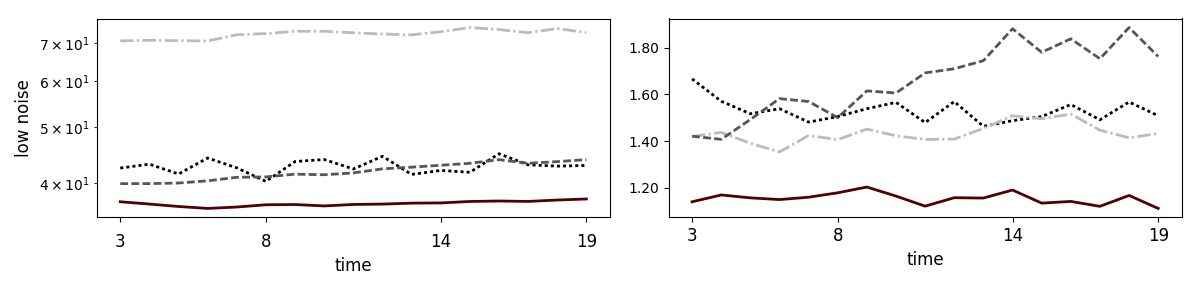}
	\end{subfigure}
	\vspace{-1mm}
	\captionof{figure}{Filter scores for different parameter settings; \textbf{one-dimensional} domain. Note that we used different scales on the vertical axis for each plot, with a logarithmic scale for the KL divergence.}
	\label{fig:scores1d}
\end{figure}


\subsection{Two-dimensional domain \label{sec:simul-2d}}

We also considered a diffusion-advection model on a unit square, and we discretized it on a regular grid of size $n_\grid = 34 \times 34 = 1{,}156$. As before, we used $T=20$ evenly spaced time points. Writing the model in the linear form \eqref{eq:obs}-\eqref{eq:evol}, $\levol_t$ was a sparse matrix with nonzero entries corresponding to interactions between neighboring grid points to the right, left, top and bottom. A detailed description of the simulation, including examples of process realizations, is given in Section S4.3.

\begin{table}
	\centering
	\small
	\begin{tabular}{c|ccccc} 
	& $n_t/n_\grid$ & $\nu$ & $\lambda$ & $\sigma^2_w$ & $\sigma^2_v$ \\
	\hline \hline
	baseline & 0.1 & 0.5 & 0.15 & 0.5 & 0.25 \\
	\hline
	smooth & 0.1 & \textbf{1.5} & 0.15 & 0.5 & 0.25 \\
	dense obs.\ & \textbf{0.3} & 0.5 & 0.15 & 0.5 & 0.25 \\
	low noise & 0.1 & 0.5 & 0.15 & 0.5 & \textbf{0.1}
	\end{tabular}
	 \caption{Settings used in the \textbf{2D} simulation. Bold values indicate changes with respect to the baseline.}
	 \label{tab:baseline-2d}
\end{table}

Similar to the 1D case, we first considered baseline parameter settings and then we changed some of them, one at a time. The multi-resolution decomposition used $M=4$ and, similar to \citet{Katzfuss2016mra} we changed $J_m$ across resolutions $m$: $(J_1, \dots, J_4) = (2, 4, 4, 4)$. We also varied the numbers of knots $r_m$ used at each resolution: $(r_0, \dots, r_4) = (16, 8, 6, 6, 6)$. Thus, to achieve a fair comparison, we used $N=42$ for EnKF, LRF, and MRA. As shown in Figure \ref{fig:scores-2d}, MRF again performed best in all four scenarios.

\begin{figure}
	\begin{subfigure}{1.0\textwidth}
		\includegraphics[trim={0, 3mm, 0, 0}, clip, width=1.0\textwidth]{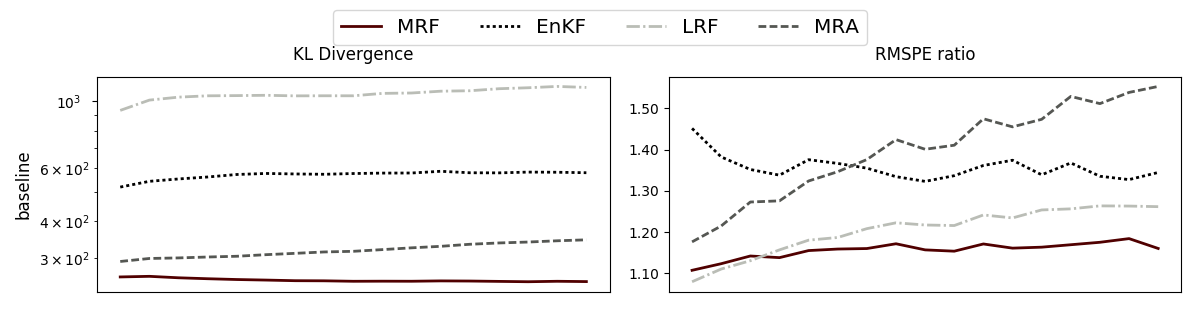}
	\end{subfigure}
	\begin{subfigure}{1.0\textwidth}
		\includegraphics[trim={0, 3mm, 0, 5mm}, clip, width=1.0\textwidth]{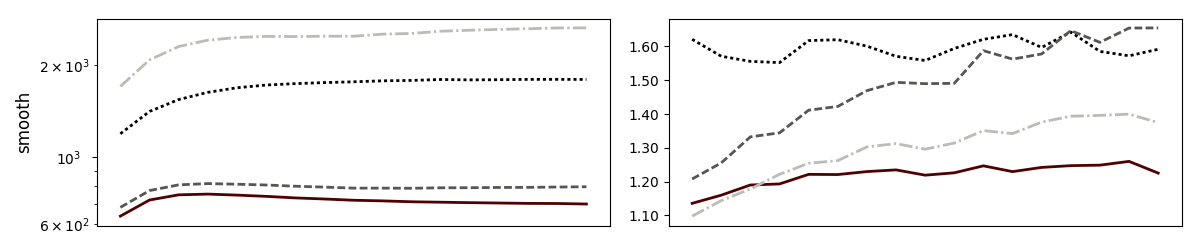}
	\end{subfigure}
	\begin{subfigure}{1.0\textwidth}
		\includegraphics[trim={0, 3mm, 0, 5mm}, clip, width=1.0\textwidth]{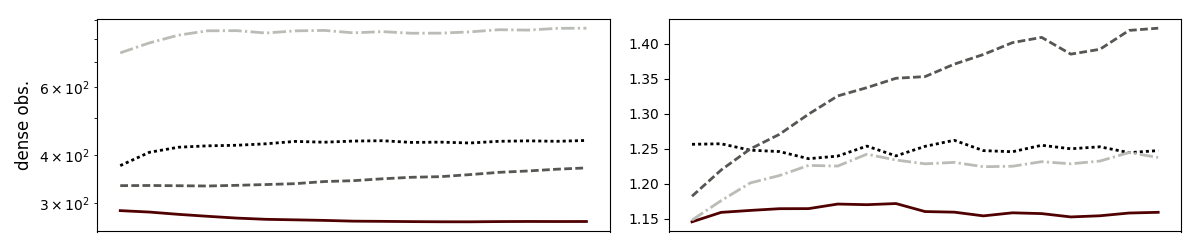}
	\end{subfigure}
	\begin{subfigure}{1.0\textwidth}
		\includegraphics[trim={0, 4mm, 0, 5mm}, clip, width=1.0\textwidth]{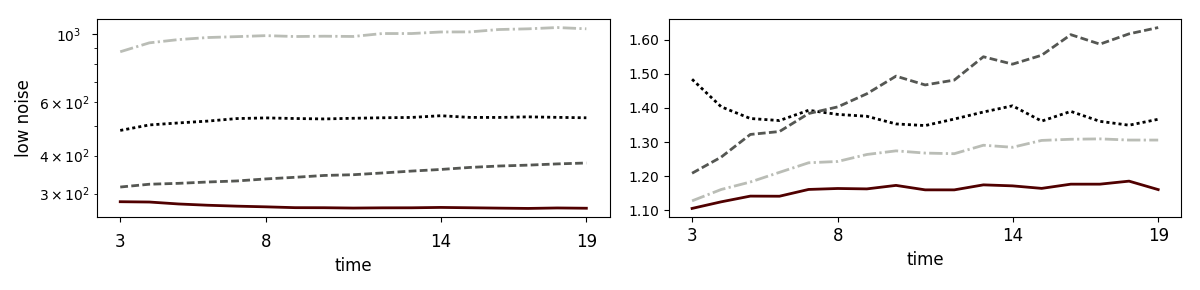}
	\end{subfigure}
	\vspace{-1mm}
	\caption{Filter scores for different parameter settings; \textbf{two-dimensional} domain. Note that we used different scales on the vertical axis for each plot, with a logarithmic scale for the KL divergence.}
	\label{fig:scores-2d}
\end{figure}


\section{Sediment movements in Lake Michigan\label{sec:data-application}}

We also considered filtering inference on sediment concentration in Lake Michigan over a period of one month, March 1998, based on satellite data. Such inference can be used by hydrologists to increase their understanding of sediment transport mechanisms and fine-tune existing domain-specific models. We closely followed an earlier study of this problem done by \citet{Stroud2010} in the context of spatio-temporal smoothing. Unless specified otherwise, we used the same model and parameter estimates. We briefly summarize the general framework below and indicate the few modifications we introduced. 

The lake area was divided into $n_\grid=14{,}558$ grid cells of size 2km $\times$ 2km each. We use $\bx_t$ to denote the sediment concentrations at the $n_\grid$ cells at time $t$. The time dimension was discretized into 409 intervals. The sediment transport model was assumed to be
$
\bx_t = \levol_t \bx_{t-1} + \bfrho_t + \bw_t,
$
where $\levol_t$ describes the temporal evolution based on a hydrological PDE model, $\bfrho_t$ is a vector with external inputs representing the influence of water velocity and bottom sheer stress, and the model error $\bw_t$ is assumed  to follow a $\normal(\bfzero, \bQ_t)$ distribution with covariance matrix $\bQ_t = (\sigma_\omega^2 \bfOmega_t \bfOmega'_t) \circ  \bT$, where $\circ$ denotes element-wise multiplication. All matrices $\bfOmega_t$ have dimensions $n_\grid \times 5$ and reflect the spatial structure of the error in the original study, while $\bT$ is taken to be a tapering matrix based on a Kanter covariance function with a tapering radius that leaves about 200 nonzero elements in each row.

The data comprise 10 satellite measurements of remote-sensing reflectance (RSR) at the frequency of 555 nm taken over the southern basin of Lake Michigan, modified in a way that accounts for the effects of the cloud cover. The observed value at each grid point was assumed to be the first-order Taylor expansion of $h(c) = \theta_0 + \theta_1\log(1+ \theta_2(c + \theta_3))$ taken around the initial mean of the sediment concentration at time $t=0$. 
Using $\by_t$ to denote the vector of observations at time $t$ after removing a time-varying instrument bias and accounting for constant terms in the Taylor expansion, we assumed $\by_t = \bH_t\bx_t + \bv_t$ as in \eqref{eq:obs},
where $\bH_t$ had only one nonzero element in each row, $\bv_t \sim \normal(\bfzero, \bR_t)$, and $\bR_t$ was diagonal. 

Because of the moderate size of the spatial grid, we were able to compute the exact Kalman filter solution. We set $M=5$, $J=4$, and $(r_0, \dots, r_5) = (16, 8, 8, 8, 4, 4)$ for the MRF, which implied that $N=\sum_m r_m = 48$ for the other approximation methods in Section \ref{sec:simulations}. The tapering range used in EnKF was selected such that the tapering matrix had only 5 nonzero elements per row, which corresponds to the setting used by \citet{Stroud2010}. While this is inconsistent with the comparison principles outlined in Section \ref{sec:simulations}, it made the EnKF perform better in this case. 

\begin{table}
	\caption{Root average squared difference (RASD) between approximate and exact filtering means for sediment concentration}
	\centering
	\small
	\begin{tabular}{c|cccc}
		 & MRF & EnKF & LRF & MRA \\
		\hline
		RASD & 0.08 & 0.22 & 0.42 & 0.72
	\end{tabular}

	\label{tab:mse}
\end{table}

As the true concentrations were unknown, we compared the approximate filtering means to the exact means obtained by the Kalman filter, using the root average squared difference $\big(\sum_t \sum_i (\hat\bfmu_t[i] - \bfmu^{KF}_t[i])^2\big)^{1/2}$ between the approximate filtering means $\hat\bfmu_t[i]$ and the KF means $\bfmu^{KF}_t[i]$, averaged over all times $t$ and grid points $i$. The results, reported in Table \ref{tab:mse}, show the MRF outperforming all other approximate methods. To visually verify these results, we also present satellite data and sediment concentration estimates for three selected time points in Figure \ref{fig:preds}. A video with all time points can be found at \url{http://spatial.stat.tamu.edu}.

\begin{figure}
	\centering
	\begin{tabular}{ccc}
		\includegraphics[angle=90,origin=c,width=0.31\textwidth]{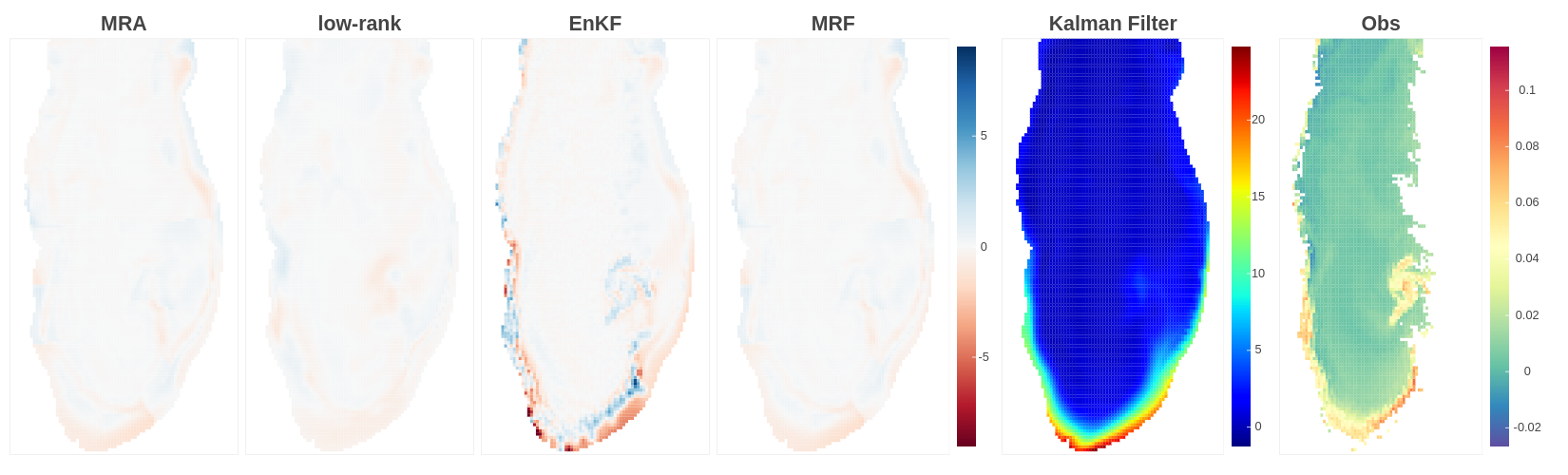} &
		 \includegraphics[angle=90,origin=c,width=0.31\textwidth]{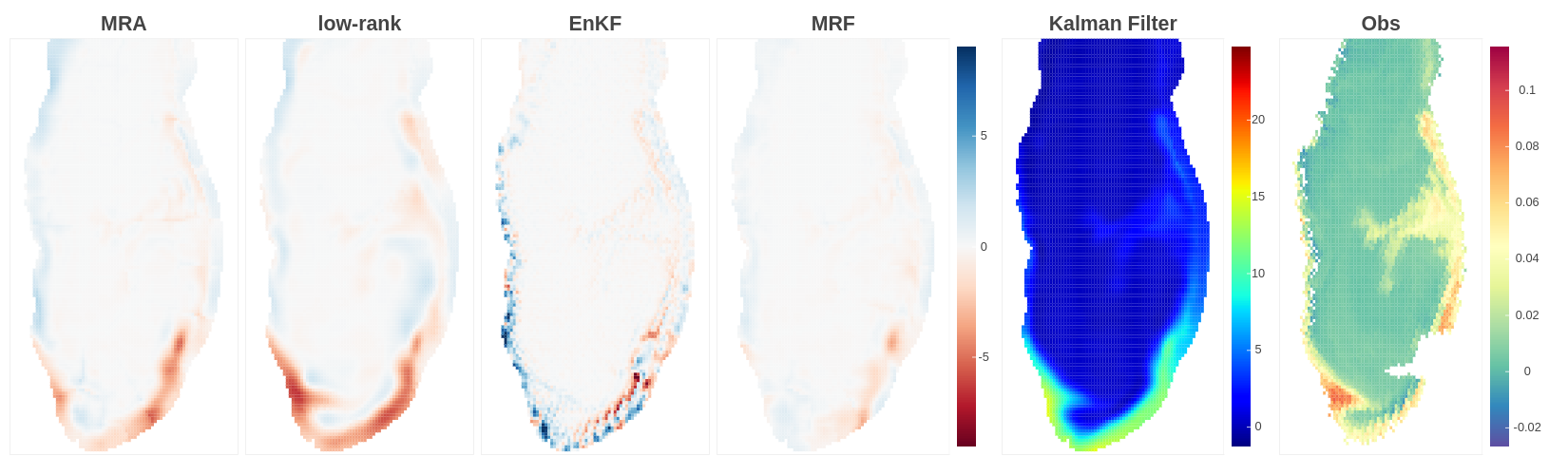} &
		  \includegraphics[angle=90,origin=c,width=0.31\textwidth]{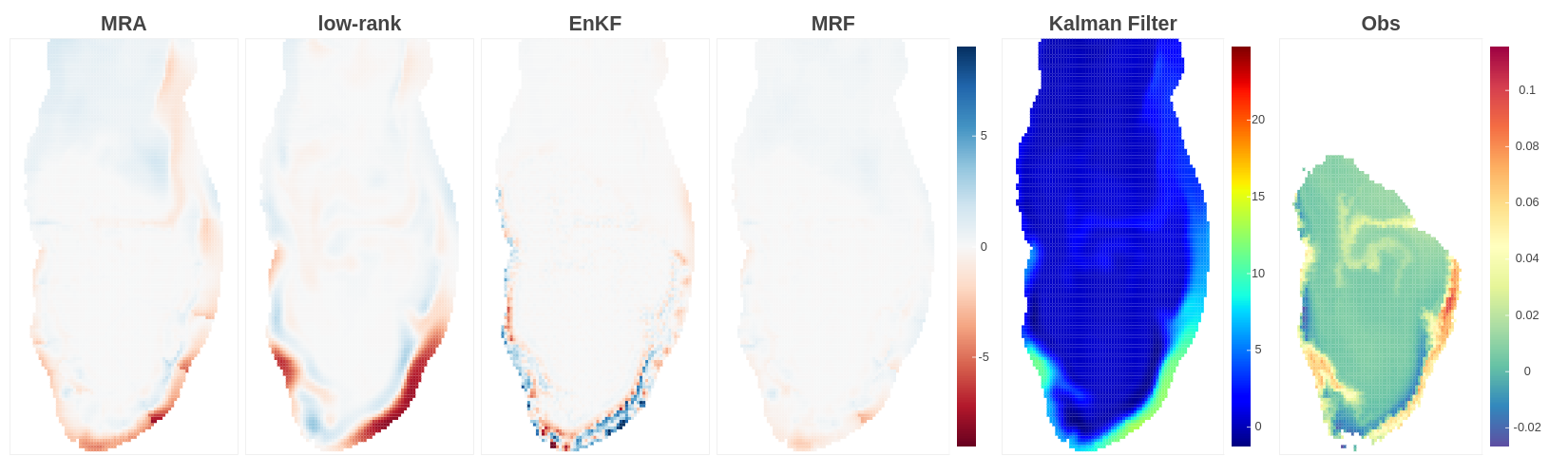} \\
		&&\\
	\small	(a) t = 2 & (b) t= 267 & (c) t=409
	\vspace{1mm}	
	\end{tabular}
	\caption{Satellite data (in log RSR) and exact Kalman filtering means of sediment concentrations (in mg/L), along with differences of approximate filtering means to the Kalman filter. We display the results for the southern basin of the lake only, where differences between the methods are most pronounced.}
	\label{fig:preds}
\end{figure}

For a grid of the size $n_\grid$ considered here, a single step of the MRF took roughly 5\% of the time required by the exact Kalman filter (on a laptop with 8GB of memory and Intel(R) Core(TM) i7-3630QM CPU @ 2.40GHz). More importantly, the MRF scales well to even larger grid sizes (see Section \ref{sec:comp-cost}), while exact calculations will quickly become infeasible due to memory constraints. Exact computation times and memory limitations will, of course, depend heavily on the computational environment.

\section{Conclusions and future work}\label{sec:conclusion}

We introduced the multi-resolution filter (MRF), a new filtering method for linear Gaussian spatio-temporal state-space models, which relies on a block-sparse multi-resolution matrix decomposition. We proved that the sparsity can be preserved under filtering through time, ensuring scalability of the MRF to very large spatial grids.
In our comparisons, the MRF substantially outperformed existing methods that can be used to approximate the Kalman filter. We also successfully applied the MRF to inferring sediment concentration in Lake Michigan.

Spatio-temporal processes typically exhibit highly complicated structures that make exact inference intractable, especially in high dimensions. We believe that it is often better to conduct approximate inference for a realistic, intractable model, rather than carrying out ``exact'' inference for a crude simplification (e.g., a low-rank version) of the model. While it might be challenging to precisely quantify the approximation accuracy in the former case (e.g., for the MRF), approximate inference can give better results than exact inference in a simplified model, which often completely ignores the error incurred by simplifying the model.

While we have focused on spatio-temporal data here, our methods are also applicable to general SSMs of the form \eqref{eq:obs}--\eqref{eq:evol} that do not correspond to physical space and time, as long as some distance between the elements of each state vector can be specified. 

Potential future work includes extensions to non-Gaussian data, nonlinear evolution, and smoothing inference. We are also developing a user-friendly implementation of the MRF with sensible default settings for the number of knots and domain partitioning.

\footnotesize
\appendix
\section*{Acknowledgments}

The authors were partially supported by National Science Foundation (NSF) Grants DMS--1521676 and DMS--1654083. We would like to thank Jacob Bien, Wenlong Gong, Joseph Guinness, Dorit Hammerling, Ephrahim Hanks, Peter Kuchement, Mohsen Pourahmadi, Ramalingam Saravanan, Michael Stein, Jonathan Stroud, Istvan Szunyogh, Christopher Wikle, Catherine Yan, and several reviewers for helpful comments and discussions. Special thanks to Jonathan Stroud for providing code and data for the Lake Michigan example.


\section{Proofs \label{app:proofs}}

We now provide proofs for the propositions stated throughout the article. We simplify notation by dropping most time subscripts; to avoid confusion, we denote $\bB_{t|t-1}$ by $\bB$, and $\bB_{t|t}$ by $\tilde{\bB}$. In Section S8, we provide lemmas with proofs that are used in the proof of Proposition \ref{prop:pd} here. Sections S6-S7 contain additional technical concepts used in the lemmas, including a review of basic ideas from graph theory, hierarchical-matrix theory, and some illustrative figures. Finally, throughout this appendix, if $\bG$ is a square matrix, we use $\bG^L$ and $\bG^U$ to denote its lower and upper triangles, respectively.

\begin{proof}[Proof of Proposition \ref{prop:sparsity-B}]
Recall that $\bB = \left( \bB^M, \bB^{M-1}, \ldots, \bB^0 \right)$. 
This lets us write the $i$-th row of $\bB$ as $\bB[i,:] = \left( \bB^M[i,:], \bB^{M-1}[i,:], \ldots, \bB^0[i,:] \right)$.
By construction, each block $\bB^m$ is block-diagonal and such that for $m\leq M$, each segment $\bB^m[i,:]$ has only $r_m$ nonzero elements.
Because each row of $\bB$ is composed of $M+1$ blocks $\bB^m[i,:]$ for $m=0, \dots, M$, this ends the proof.
\end{proof}

\begin{proof}[Proof of Proposition \ref{prop:blockBB}]
Direct calculation shows that $\bB'\bB$ is a block matrix consisting of $(M+1) \times (M+1)$ blocks with $(M-k+1,M-l+1)$-th block $(\bB^{k})'\bB^{l}$. 
Since for each $j$ the matrix $\bB^j$ has dimensions $n_\grid \times |\knots^j|$ it follows that $(\bB^k)'\bB^l$ is of size $|\knots^k| \times |\knots^l|$. 
Note that $\bB^k$ and $\bB^l$ are block-diagonal with blocks of size $|\gridind_\jk| \times r_k$ and $|\gridind_{j_1, \dots j_l}| \times r_l$, respectively. 
Assuming without loss of generality that $k\leq l$, we have that
\begin{equation} \label{eq:block-inclusion}
{\textstyle \gridind_\jk = \bigcup_{j_{k+1}=1}^J \dots \bigcup_{j_l=1}^J \gridind_\jl, \quad\quad\quad \implies \quad\quad\quad |\gridind_\jk| = \sum_{j_{k+1}=1}^J \dots \sum_{j_l=1}^J |\gridind_\jl|}.
\end{equation}

Thus $\bB^l$ can be viewed as a block-diagonal matrix with blocks of height $|\gridind_\jk|$. 
We can also determine their width to be $w_\jk = \sum_{j_{k+1}=1}^J \dots \sum_{j_l=1}^J |\knots_{j_1, \dots, j_k, j_{k+1}, \dots j_l}|$.
This means $(\bB^k)'\bB^l$ is the product of two block-diagonal matrices with matching block sizes. 
Therefore the product will be also block-diagonal with blocks of dimensions $w_\jk\times r_k$.
\end{proof}

\begin{proof}[Proof of Proposition \ref{prop:pd}] 
~
\begin{enumerate}	
\item Observe that under Assumption \ref{ass:hr}, $\bR^{-1}$ and $\bH$ are block-diagonal with blocks of matching dimensions. Since $\bR^{-1}$ has square blocks, we conclude that $\bH'\bR^{-1} \in \Sp(\bH')$. Thus, if $\tilde{\bR}^{-1} \colonequals \bH'\bR^{-1}\bH$, then $\tilde{\bR}^{-1} \in \Sp(\bH'\bH)$. The latter is a block-diagonal matrix with square blocks of size $|\gridind_\jM|$.

Next, we demonstrate that $\bB'\tilde{\bR}^{-1}\bB \in \Sp(\bB'\bB)$. First, as $\tilde{\bR}^{-1}$ is block-diagonal, the $(M+1-k,M+1-l)$-th block of $\bB'\tilde{\bR}^{-1}\bB$ is given by $(\bB^{k})' \tilde{\bR}^{-1} \bB^{l}$.

Now, for each $0\leq k \leq M$, $\bB^k$ is a block-diagonal matrix with blocks of size $|\gridind_\jk| \times r_k$, but $\tilde{\bR}^{-1}$ has blocks of size $|\gridind_\jM| \times |\gridind_\jM|$. However, recalling (\ref{eq:block-inclusion}), blocks of $\tilde{\bR}^{-1}$ can also be viewed as having dimensions $|\gridind_\jk| \times r_k$. Because this implies that $(\bB^k)'\tilde{\bR}^{-1} \in \Sp(\bB^k)'$, we have $(\bB^k)'\tilde{\bR}^{-1}\bB^l \in \Sp((\bB^k)' \bB^l)$ and hence $(\bB)'\tilde{\bR}^{-1}\bB \in \Sp (\bB'\bB)$. Finally, we conclude that $\prec \in \Sp(\bB'\bB)$, because $\prec=\bI_{n_\grid} + \bB'\tilde{\bR}^{-1}\bB$ and all diagonal elements of $\bB'\bB$ are nonzero.
	
	\item According to \citet[][Thm.~1]{khare2012sparse}, for any positive definite matrix $\bS$, the sparsity pattern in the Cholesky factor and its inverse are the same as that of the lower triangle of $\bS$, if
(a) the pattern of zeros in $\bS$ corresponds to a homogeneous graph, and  
(b) the order of the vertices of the graph implied by the order of the rows is a Hasse-tree-based elimination scheme.
Lemmas S1 and S2 in Section S8 show that these two conditions are met for $\bB'\bB$. These lemmas, together with Part 1 above, imply that $\bL \in \Sp(\prec^L)$ and $\bL^{-1} \in \Sp(\prec^L)$.

	 \item Observe that $\prec^{-1} = (\bL\bL')^{-1} = (\bL^{-1})' \bL^{-1}$. Thus $(\bL^{-1})'$ is the Cholesky factor of $\prec^{-1}$. Moreover, by Part 2, $(\bL^{-1})' \in \Sp((\bB'\bB)^U)$. This allows us to define blocks $\blt^{m,k}$ such that
	 \begin{equation*}
	 (\bL^{-1})' = \left[ \begin{array}{cccc}
		 \blt^{M,M}  	& \ldots 		& \blt^{M,1}	& \blt^{M,0} \\
		 \vdots 	   	& \ddots 		& \vdots 		& \vdots \\
		 \bfzero 		& \ldots 		& \blt^{1,1}	& \blt^{1,0} \\
		 \bfzero 		& \ldots 		& \bfzero 		& \blt^{0,0}
		 \end{array} \right] = \left[ \blt^{m,k} \right]_{m,k = M, \ldots, 0},
	 \end{equation*}
	 where each $\blt^{m,k} \in \Sp((\bB^m)'\bB^k)$ for $m \geq k$ and is zero when $m<k$. This means that for each $m,k$ with $m \geq k$, we can consider the sparsity of $\bB^m(\bB^m)'\bB^k$ instead of $\bB^m \blt^{m,k}$.

Recall that $\bB^k$ is block-diagonal with blocks of size $|\gridind_\jk|\times r_k$. Similarly, $\bB^m$ has blocks that are $|\gridind_\jm|\times r_m$. 
However, since $k\leq m$, using (\ref{eq:block-inclusion})

we can also see $\bB^m$ as a block-diagonal matrix whose blocks have dimensions $|\gridind_\jk|\times r_k$ (cf.\ proof of Proposition \ref{prop:blockBB}). 
This implies that $\bB^m(\bB^m)' \in \Sp(\bB^k(\bB^k)')$, which means that $\bB^m(\bB^m)'\bB^k \in \Sp(\bB^k)$ and hence $\bB^m\blt^{m,k} \in \Sp(\bB^k)$.
	
Finally, we observe that
	\begin{equation*}
	\bB \cdot (\bL^{-1})' = \left[ \bB^M \; \bB^{M-1} \; \ldots \; \bB^0 \right] 
	 \cdot \left[ \begin{array}{cccc}
		 \blt^{M,M}  	& \ldots 		& \blt^{M,1}	& \blt^{M,0} \\
		 \vdots 	   	& \ddots 		& \vdots 		& \vdots \\
		 \bfzero 		& \ldots 		& \blt^{1,1}	& \blt^{1,0} \\
		 \bfzero 		& \ldots 		& \bfzero 		& \blt^{0,0}
		 \end{array} \right] = 
		 \left[ \tilde{\bB}^M \; \tilde{\bB}^{M-1} \; \ldots \; \tilde{\bB}^0 \right],
	\end{equation*}
	where $\tilde{\bB}^k = \sum_{m=k} ^M \bB^m \blt^{m,k}$. Since we showed that $\bB^m\blt^{m,k} \in \Sp(\bB^k)$, this means that $\tilde{\bB}^k \in \Sp(\bB^k)$.
\end{enumerate}
\end{proof}

\begin{proof}[Proof of Proposition \ref{prop:sparsity}]
By Proposition \ref{prop:pd}, Parts 1 and 2, $\bL \in \Sp((\bB'\bB)^L)$. 
Therefore, it suffices to show that $\bc_j=(\bB'\bB)^L[:,j]$, the $j$-th column of $(\bB'\bB)^L$, has $\mathcal{O}(N)$ nonzero elements for each $j$. 
Notice that $\bc_j = (0, \dots, 0, \bc^{k,k}_j, \dots \bc^{M,k}_j)'$ where $\bc^{k,l}_j = ((\bB^k)'\bB^l)[:,j]$, the $j$-th column of $(\bB^k)'\bB^l$. 
Because $l\geq k$, each of the $\bB^k(\bB^l)'$ matrices is block-diagonal with blocks of height $|\knots_\jk|$. 
The vector $\bc_j^{k,l}$ intersects exactly one of such diagonal blocks, and so the total number of nonzero elements in $\bc$ is at most $N=\sum_m r_m$.
\end{proof}

\begin{proof}[Proof of Proposition \ref{prop:mrd-comp}]
Observe that it is enough to consider only the complexity of operations in \eqref{eq:W} because $\bV^l_\jm$ can be obtained by selecting appropriate rows from $\bW_\jm^l$. 
Given matrix $\bfSigma$, we only need to calculate the second term in $\eqref{eq:W}$. First, note that calculating $\bW^l_\jm$ for all $(\jm)$ is the same as computing $\bW^l_\jl$ for all $l$, and then, for each $(\jm)$, selecting the rows corresponding to $\gridind_\jm$. Thus we show the complexity of calculating all $\bW^l_\jl$. 

Assume that all $\bW^k_\jl$ for $k<l$ are already given and consider the summation term.
Each of its components takes $\mathcal{O}(|\gridind_\jl|r^2 + r^3 + r^3 + r^3) = \mathcal{O}(|\gridind_\jl|r^2)$ to compute. 
Because for any given $l$, there are at most $M$ terms under the summation, their joint computation time is $\mathcal{O}(M\cdot|\gridind_\jl|r^2)$.
For a given $l$, these calculations have to be performed for each set of indices $\gridind_\jl$.
Thus, obtaining all $\bW^l_\jl$ requires $\mathcal{O}(M\cdot\sum_{\jl}|\gridind_\jl|r^2) = \mathcal{O}(M\cdot n r^2)$ time.
Now notice that $\gridind_\jm \subset \gridind_\jl$.
Therefore, once we have $\bW^l_\jl$, we obtain $\bW^l_\jm$ by selecting appropriate rows from $\bW^l_\jl$. Finally, iterating over $l=0, \dots, M$ means that the total cost of Algorithm 2 is $\mathcal{O}(M^2 n r^2) = \mathcal{O}(nN^2)$.
\end{proof}

\begin{proof}[Proof of Proposition \ref{prop:mrf-complex}]

The forecast step requires calculating $\bfmu_{t|t-1} = \levol_t\bfmu_{t-1|t-1}$ and $\bB_{t|t-1}^F = \levol_t\bB_{t-1|t-1}$, 
which can be obtained in $\order(nr)$ and $\order(nrN)$ time, respectively, due to the sparsity structures of $\bB_{t-1|t-1}$ (see Proposition \ref{prop:sparsity-B}) and $\levol_t$ (Assumption \ref{ass:local}).

By Proposition \ref{prop:mrd-comp}, the MRD of a given covariance matrix $\bfSigma$ requires $\mathcal{O}(nN^2)$ operations. Here, $\bfSigma=\bfSigma_{t|t-1}$ is not given, but each $(i,j)$ element must be computed as
\[
\bfSigma_{t|t-1}[i,j] = (\bB_{t|t-1}^F[i,\,:\,])(\bB_{t|t-1}^F[j,\,:\,])'+ \bQ_t[i,j].
\]
This does not increase the complexity of the MRD, because the MRD requires only $\order(nN)$ elements of $\bfSigma_{t|t-1}$, each of which can be computed in $\order(N)$ time due to the sparsity structure of $\bB^F_{t|t-1}$. 
Thus, the entire forecast step can be performed in $\mathcal{O}(nN^2)$ time.

In the update step, we must compute $\pprec$, $\bL^{-1}=\pprec^{-1/2}$, and $\bB_{t|t}=\bB_{t|t-1} (\bL^{-1})'$.
Under Assumption \ref{ass:hr}, $\bH$ and $\bR$ are block-diagonal matrices with at most $J^M$ blocks of size $\order(r \times r)$ each. 
Thus, calculating $\tilde{\bR} \colonequals \bH'\bR^{-1}\bH$ requires $\order(J^M r^3)=\order(n r^2)$ operations. 
The resulting matrix is block-diagonal with blocks of size $\order(r \times r)$, conformable with the blocks of $\bB_{t|t-1}$. 
Given $\tilde{\bR}$, the cost of calculating $\pprec$ is dominated by multiplying $\bB_{t|t-1}$ by $\tilde{\bR}$. 
By Proposition \ref{prop:sparsity-B}, each row of $\bB_{t|t-1}$ has $N$ nonzero elements, so in view of the structure of $\tilde{\bR}$ determined above, 
it takes $\mathcal{O}(nN^2)$ operations to obtain the product $\bB_{t|t-1}\tilde{\bR}$ and, consequently, to calculate $\pprec$.

The complexity of computing a Cholesky factor is on the order of the sum of the squared number of nonzero elements per column \citep[e.g.,][Thm.~2.2]{Toledo2007}. 
Thus, computing $\bL$ requires $\order(n N^2)$ time, because $\bL$ has $\order(N)$ elements in each of its $n$ columns (Proposition \ref{prop:pd}).
Computing $\bL^{-1}$ can be accomplished by solving a triangular system of equations for each column of $\bL^{-1}$. 
Using Proposition \ref{prop:sparsity}, we conclude that each of these systems will have only $\order(N)$ equations and thus can be solved in $\order(N^2)$ time \citep[Ch.~4.2]{kincaid2002numerical}. 
As we need to compute $n$ columns, the total effort required for obtaining $\bL^{-1}$ is $\order(n N^2)$.

Finally, recall that both $\bB_{t|t-1}$ and $\bL^{-1}$ have $\order(N)$ elements in each row and that, by Proposition \ref{prop:pd}, their product, $\bB_{t|t}$, has only $\order(nN)$ nonzero elements. Because each of these elements can be computed in $\order(N)$ time, the total computation cost of this step is $\order(nN^2)$.

To summarize, all three matrices necessary in the update step can be obtained in $\mathcal{O}(nN^2)$ time. Thus, we showed that both steps of Algorithm \ref{alg:mrf} require $\mathcal{O}(nN^2)$ time, which completes the proof.
\end{proof}

\begin{proof}[Proof of Proposition \ref{prop:hodlr}]
For $m=1,\ldots,M$, define $\bB^{0:m} = (\bB^m, \dots, \bB^0)$ as the submatrix of $\bB$ consisting of the column blocks corresponding to resolutions $0,\dots,m$. To show that $\bB\bB' \in H_M^r$, we prove by induction over $m=1,\ldots,M$ that $(\bB^{0:m}\bB^{0:m})' \in H_m^r$. For $m=1$, we have $\bB^{0:1} = \left[ \begin{array}{ccc} \bB_{1} & \bfzero & \bB_{01}\\ \bfzero & \bB_{2} & \bB_{02}\end{array} \right]$, where $\bB_{01}$ and $\bB_{02}$ are each $r$ columns wide. Thus, 
\[
\bB^{0:1}(\bB^{0:1})' = \left[ \begin{array}{cc}\bB_{01} \bB_{01}' + \bB_1 \bB_1' & \bB_{01} \bB_{02}' \\ \bB_{02} \bB_{01}' & \bB_{02} \bB_{02}' + \bB_2 \bB_2' \end{array} \right].
\]
and so $\bB^{0:1}(\bB^{0:1})' \in H_1^r$. 

Now, assume that $\bB^{0:m-1} (\bB^{0:m-1})' \in H_{m-1}^r$. We have
\[
\textstyle \bB^{0:m} (\bB^{0:m})' = \sum _{j=0} ^{m} \bB^j (\bB^j)' = \sum _{j=0} ^{m-1} \bB^j (\bB^j)' + \bB^m (\bB^m)' = \bB^{0:m-1} (\bB^{0:m-1})'  + \bB^m (\bB^m)'.
\]
Next observe that for any $k$, the matrix $\bB^k$ is block-diagonal, which means that $\bB^k (\bB^k)'$ is also block-diagonal with dense blocks $\bB^k(\bB^k)'[\gridind_{\jk},\gridind_{\jk}]$. However, recursive partitioning of the domain means that $\gridind_{j_1, \dots, j_{k-1}} \supset \gridind_{\jk}$. Therefore, if $k>j$, then blocks of $\bB^k (\bB^k)'$ are nested within the blocks of $\bB^j (\bB^j)'$. Since this holds also for $k=m-1$ and $j=m$, it means that $\bB^{1:m} (\bB^{1:m})' \in H_{m}^r$.
\end{proof}

\bibliographystyle{apalike}
\bibliography{lib}

\end{document}